\documentclass{amsart}
\usepackage[foot]{amsaddr}
\newtheorem{theorem}{Theorem}
\newtheorem{lemma}{Lemma}
\newtheorem{proposition}{Proposition}
\newtheorem{corollary}{Corollary}

\theoremstyle{remark}
\newtheorem{remark}{Remark}

\usepackage[utf8]{inputenc}
\usepackage{enumerate}
\usepackage{mathrsfs}
\usepackage{amssymb}
\usepackage{mathtools}
\usepackage{graphicx}
\usepackage{hyperref}
\usepackage{soul}

\graphicspath{ {./images/} }

\newcommand{\norm}[1]{\left\lVert #1 \right\rVert}

\newcommand{\dbangle}[2]{\langle\!\langle #1, #2 \rangle\!\rangle}
\newcommand{\dangle}[1]{\langle\!\langle #1 \rangle\!\rangle}

\usepackage{stackrel}
\usepackage{color}
\usepackage{url}

\newcommand{\bx}{{ \mathbf{x} }}
\newcommand{\by}{{\mathbf{y}}}

\newcommand{\bn}{{\mathbf{n}}}

\newcommand{\bu}{{ \mathbf{u}}}

\newcommand{\bT}{\mathbf{T}}

\newcommand{\bS}{{\mathbf{S}}}

\newcommand{\cE}{\mathcal{E}}
\newcommand{\cT}{{\mathcal{T}}}
\newcommand{\cN}{{\mathcal{N}}}

\newcommand{\lb}{\label}
\newcommand{\be}{\begin{equation}}
\newcommand{\ee}{\end{equation}}
\newcommand{\bea}{\begin{eqnarray}}
\newcommand{\eea}{\end{eqnarray}}
\newcommand{\bzed}{{\bf 0}}

\newcommand{\btau}{{\mbox{\boldmath $\tau$}}}
\newcommand{\biota}{{\mbox{\boldmath $\iota$}}}
\newcommand{\bomega}{{\mbox{\boldmath $\omega$}}}
\newcommand{\barphi}{{\mbox{\boldmath $\varphi$}}}
\newcommand{\bchi}{{\mbox{\boldmath $\chi$}}}

\newcommand{\bpsi}{{\mbox{\boldmath $\psi$}}}

\newcommand{\btimes}{{\mbox{\boldmath $\,\times\,$}}}
\newcommand{\bdot}{{\mbox{\boldmath $\,\cdot\,$}}}
\newcommand{\bdots}{{\mbox{\boldmath $\,:\,$}}}
\newcommand{\grad}{{\mbox{\boldmath $\nabla$}}}
\newcommand{\ext}{\textbf{Ext}}

\begin{document}

\title[Inertial Momentum Dissipation]{Inertial Momentum Dissipation for Viscosity Solutions of Euler Equations: External Flow Around a Smooth Body}

%    author one information
\author[H. Quan]{Hao Quan$^1$}
% \address{Department of Applied Mathematics \& Statistics The Johns Hopkins University, Baltimore, MD 21218, USA}
\email{$^1$haoquan@jhu.edu}

%    author two information
\author[G. Eyink]{Gregory L. Eyink$^2$}
\address{$^1, ^2$Department of Applied Mathematics \& Statistics The Johns Hopkins University, Baltimore, MD 21218, USA}
\address{$^2$Department of Physics and Astronomy The Johns Hopkins University, Baltimore, MD 21218, USA}
\email{$^2$eyink@jhu.edu}

% \subjclass[2010]{Primary }

\keywords{}

\date{\today}

\dedicatory{}

\begin{abstract}
We study the local balance of momentum for weak solutions of incompressible Euler equations obtained from the 
zero-viscosity limit in the presence of solid boundaries, taking as an example flow around a finite, smooth body.
We show that both viscous skin friction and wall pressure exist in the inviscid limit as distributions 
on the body surface. We define a nonlinear spatial flux of momentum toward the wall for the 
Euler solution, and show that wall friction and pressure are obtained from this momentum flux in the limit of 
vanishing distance to the wall, for the wall-parallel and wall-normal components, respectively. 
We show furthermore that the skin friction describing anomalous momentum transfer to the wall 
will vanish if velocity and pressure are bounded in a neighborhood of the wall and if also the essential supremum of wall-normal velocity within a small distance of the wall vanishes with this distance
(a precise form of the vanishing wall-normal velocity condition). In the latter case, all of the limiting 
drag arises from pressure forces acting on the body and the pressure at the body surface can be 
obtained as the limit approaching the wall of the interior pressure for the Euler solution. 
As one application of this result, we show that Lighthill's theory of vorticity generation 
at the wall is valid for the Euler solutions obtained in the inviscid limit. Further, in a companion
work, we show that the Josephson-Anderson relation for the drag, recently derived for strong Navier-Stokes 
solutions, is valid for weak Euler solutions obtained in their inviscid limit.

\smallskip
\noindent \textbf{Keywords}: Onsager’s turbulence theory, inviscid limit, anomalous dissipation, 
solid walls, momentum cascade, external flow
\end{abstract}
\maketitle

\section{Introduction}
It was proposed by Taylor as early as 1915 \cite{taylor1915eddy} that in turbulent fluid flows 
interacting with a solid boundary there may be a ``finite loss of momentum at the walls due to an 
infinitesimal viscosity'', and he suggested also an analogy with weak solutions of the fluid equations describing 
shocks. The corresponding phenomenon of ``inertial energy dissipation'' has been much investigated 
since Onsager pointed out the criticality of 1/3  H\"older singularity of the velocity field 
for such dissipation \cite{onsager1949statistical}: 
see  \cite{eyink1994energy,constantin1994onsager,duchon2000inertial,cheskidov2008energy}
for proofs of the necessity of these singularities and 
\cite{isett2018proof,buckmaster2019onsager} for proofs that such dissipative solutions exist. 
This line of investigation has been recently extended to wall-bounded turbulence by Bardos \& Titi
\cite{bardos2018onsager} and by several following works \cite{drivas2018onsager,bardos2019onsager,chen2020kato}, 
which all consider the balance of kinetic energy rather than momentum. 
However, there is a well-developed phenomenology of spatial ``momentum cascade'' in wall-bounded turbulent flows, closely 
analogous to the energy cascade through scales in the bulk of the flow away from solid boundaries
\cite{tennekes1972first,jimenez2012cascades,yang2017multifractal}. As discussed in \cite{eyink2022Aonsager}, the 
mathematical methods applied to study Onsager's dissipation anomaly due to energy cascade should apply as well to the 
spatial momentum cascade and the results of the present work have been explained at a physical level 
in  our general review \cite{eyink2024onsager}.

We perform the rigorous study here in the context of flow around a finite solid body with smooth surface, 
which was the subject of the famous paradox of d'Alembert \cite{dalembert1749theoria,dalembert1768paradoxe}. 
The type of situation we consider is illustrated in Fig.~\ref{fig1}, which shows a finite body 
$B$ and the exterior flow domain $\Omega={\mathbb R}^3\setminus B$ on which the incompressible 
Navier-Stokes equation is assumed to be satisfied 
 \begin{equation} 
    \partial_t\mathbf{u}^{\nu} + \grad\bdot(\mathbf{u}^{\nu}\otimes\mathbf{u}^{\nu} + p^{\nu}\mathbf{I}) 
    - \nu\triangle\mathbf{u}^{\nu} = 0, \;\;\; \grad\bdot\mathbf{u}^{\nu} = 0,
    \quad \bx\in \Omega \label{NS1}\end{equation}
subject to the boundary conditions     
\begin{equation} \mathbf{u}^{\nu}|_{\partial B} = {\bf 0}, \;\;\; \mathbf{u}^{\nu}\underset{|\mathbf{x}|\to\infty}{\sim}\mathbf{V}.\label{NS2}
\end{equation} 
Here the pressure $p^\nu$ is obtained from the Poisson equation with Neumann boundary 
conditions inherited from the previous equations:
\be -\triangle p^\nu=\grad\otimes\grad\bdots(\bu^\nu\otimes\bu^\nu), \ \bx\in \Omega;  \quad 
\frac{\partial p^\nu}{\partial n}= \nu \bn\bdot \triangle\mathbf{u}^{\nu},\  \bx\in \partial\Omega. 
\lb{NSpress} \ee 
where $\bn$ is the normal vector at the boundary $\partial B$ directed into the domain $\Omega.$
We shall assume in this work that $B\subset {\mathbb R}^3$ is closed and bounded, and crucially that the boundary $\partial B=\partial\Omega$ is a $C^\infty$
manifold embedded in ${\mathbb R}^3.$ It is not necessary that $B$ and 
$\partial B$ be connected (i.e. physically we may have multiple bodies).
See \cite{sohr2012navier} for a mathematical treatment 
of Navier-Stokes solutions in such unbounded domains (and even when the solid boundary is non-smooth) 
and see \cite{sueur2012kato} and references therein for discussion of the closely related problem 
of the rigid motion of the solid body $B$ through an incompressible fluid filling the complement. 
We consider this particular situation because of a new mathematical approach to the 
d'Alembert paradox  based on a Josephson-Anderson relation inspired by quantum superfluids \cite{eyink2021josephson}, which is the subject of a companion paper \cite{quan2024onsager} 
that builds upon our analysis here. However, our results in this paper apply with minor changes 
to other flows involving solid walls, including interior flows within bounding walls such as 
Poiseuille flows through pipes and channels. 

\begin{figure}
\center
\includegraphics[width=.75\textwidth]{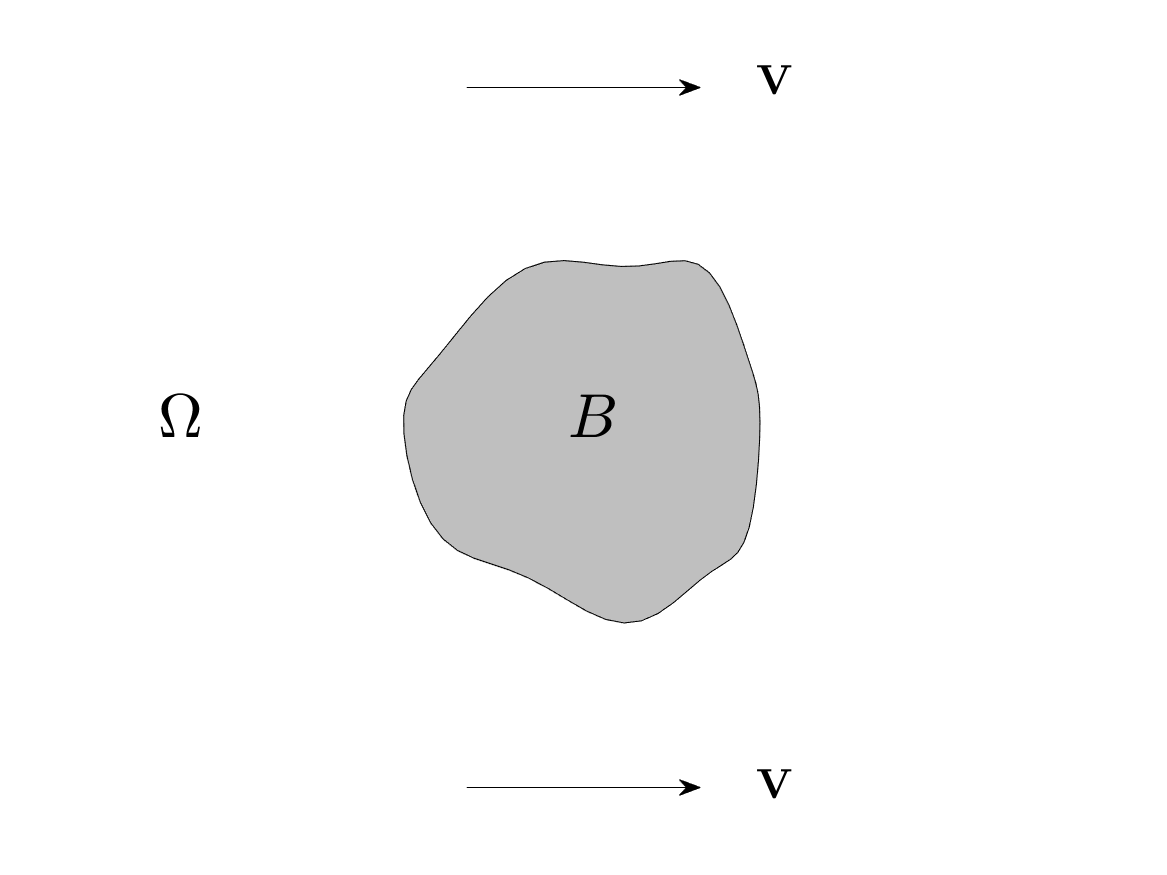}
\caption{Flow around a finite body $B$ in an unbounded region $\Omega$
filled with an incompressible fluid moving at a velocity ${\bf V}$ at far distances.} 
\label{fig1} \end{figure}

Our results and analysis here are modelled closely after those of Duchon \& Robert \cite{duchon2000inertial},
who established a kinetic energy balance distributionally in spacetime for weak solutions of 
incompressible Euler and Navier-Stokes equations. In particular, under suitable assumptions, 
\cite{duchon2000inertial}
showed that the (viscous and inertial) dissipation $\nu|\grad \bu^\nu|^2+D(\bu^\nu)$
for a sequence of Leray solutions with viscosity tending to zero must 
converge to a positive distribution (Radon measure) which agrees also with the 
inertial dissipation $D(\bu)$ for weak solutions of Euler equations obtained in the inviscid limit. 
In order to generalize the Duchon-Robert analysis to obtain a momentum balance distributionally 
in space-time, we have had to make two key modifications. First, we do not treat admissable or Leray 
weak solutions of the Navier-Stokes equations, but instead assume that all Navier-Stokes solutions are strong. The technical reason for this decision is that our argument requires consideration 
of the global momentum balance of the Navier-Stokes solution, in which spatial integration 
by parts yields an integral over $\partial B$ of the viscous Newtonian stress. 
Only recently has there been some progress in defining the trace of an averaged vorticity at the boundary for Leray solutions \cite{vasseur2023boundary}, but these results do not suffice for our analysis. There seems to be no loss of physical significance of our 
results by assuming strong solutions, however, since there is no empirical evidence 
for Leray-type singularities in any known fluid flow. The second and related difference is 
that our argument involves smearing the Navier-Stokes solutions with elements of an 
enlarged space of test functions, which need not be compactly supported in the open set 
$\Omega$ but which may instead be non-vanishing on $\partial \Omega$ and have there 
one-sided derivatives of all orders. A convenient definition of this non-standard class 
of test functions on $\bar{\Omega}\times (0,T)$ is as restrictions of standard test 
functions on ${\mathbb R}^3\times (0,T)$: 
\begin{eqnarray}
    \Bar{D}(\Bar{\Omega}\times(0,T)) &:=& \left\{\varphi = \phi|_{\Bar{\Omega}\times(0,T)}:\, 
    \phi\in C_c^{\infty}(\mathbb{R}^3\times(0,T))\right\} 
    %\cr 
    %&& \left. \vspace{3in}\text{supp}(\phi)\cap\left(\Omega\times 
    %(0,T)\right)\ne\emptyset\right\}
    \label{Dbar}
\end{eqnarray} 
This class of test functions is employed precisely to obtain crucial surface contributions 
from the pressure and Newtonian stress after integration by parts. As an aside, we note that for 
the initial-value problem the space $\Bar{D}(\Bar{\Omega}\times[0,T))$
$:=\left\{\varphi = \phi|_{\Bar{\Omega}\times[0,T)}:\, 
   \phi\in C_c^{\infty}(\mathbb{R}^3\times(-T,T))\right\}$
   %,\text{supp}(\phi)\cap\left(\Omega\times (0,T)\right)\ne\emptyset\right\}$ 
   could be similarly 
introduced, requiring slight elaboration of the arguments below.  

Our first result is that, under stated assumptions, distributional limits exist as viscosity 
tends to zero both for the normal stress or pressure and 
for the tangential Newtonian stress on the body surface, when these are considered 
as distributional sections of the normal and tangent bundles of the surface, respectively. 
More precisely, since we consider space-time distributions, we define 
the manifold %$(\partial B)_T:=\partial B\times [0,T)\subset {\mathbb R}^3\times {\mathbb R}$  
$(\partial B)_T:=\partial B\times (0,T)\subset {\mathbb R}^3\times {\mathbb R}$  
with the natural product $C^\infty$ structure and with %boundary $\partial(\partial B)_T=\partial B\times \{0\}.$
no boundary, or $\partial(\partial B)_T=\emptyset.$ Recalling that $\bn$ is the normal vector at $\partial B$ pointing into $\Omega,$ we define pressure stress acting on the wall by 
\be
-p^\nu_w \bn:= -p^\nu|_{(\partial B)_T}\bn \in D'((\partial B)_T, \mathcal{N}(\partial B)_T) 
\lb{pdist} 
\ee
as a distributional section of the normal bundle $\mathcal{N}(\partial B)_T$ and wall shear stress
\be 
    \btau^\nu_w= 2\nu\bS^\nu|_{(\partial B)_T} \bdot \bn
    =\left.\nu \frac{\partial\bu}{\partial n}\right|_{(\partial B)_T}
    =\nu\bomega^\nu|_{(\partial B)_T}\btimes\bn
    \in D'((\partial B)_T, \mathcal{T}(\partial B)_T) 
    \lb{taudist} 
\ee 
as a distributional section of the tangent bundle $\mathcal{T}(\partial B)_T.$ Here we have introduced 
the strain-rate tensor and the vorticity vector 
\be S_{ij}^\nu = \frac{1}{2}\left(\frac{\partial u_i^\nu}{\partial x_j} + \frac{\partial u_j^\nu}{\partial x_i}
\right), \quad \bomega^\nu =\grad\btimes \bu^\nu, \ee 
and note that the second equality in Eq.\eqref{taudist} is a well-known consequence of the 
stick b.c. on the velocity field \cite{lighthill1963introduction}. 
See section \ref{sec:prelim} for our notations and conventions on differential geometry. 

We then prove the following result: 

\begin{theorem}\label{theorem1} 
    Let $(\mathbf{u}^{\nu}, p^{\nu})$ be strong solutions of Navier-Stokes equations (\ref{NS1})-(\ref{NSpress}) on $\Bar{\Omega}\times (0,T)$ for $\nu>0$. 
    Assume that
    $(\mathbf{u}^{\nu})_{\nu>0}$ converges strongly to $\mathbf{u}$ in $L^2((0,T),L_{\text{loc}}^2(\Omega)):$
        \be
            \mathbf{u}^{\nu}\xrightarrow[L^2((0,T),L_{\text{loc}}^2(\Omega))]{\nu\to0}\mathbf{u}. \label{L2Conv}
        \ee    
    and that $(p^{\nu})_{\nu>0}$ converges strongly to $p$ in $L^1((0,T),L_{\text{loc}}^1(\Omega)):$    
        \be
            p^{\nu}\xrightarrow[L^1((0,T),L_{\text{loc}}^1(\Omega))]{\nu\to0}p. \label{pL1Conv}
        \ee        
    Further assume that for some $\epsilon>0$ arbitrarily small, with $\Omega_\epsilon:=\{\bx\in \Omega:\, dist(\bx,\partial B)<\epsilon\},$
        \begin{align}
            &\mathbf{u}^{\nu} \text{ uniformly bounded in } L^2((0,T),L^2(\Omega_{\epsilon}))\label{uBBound}\\
            %&\mathbf{u}^{\nu}_0 \text{ uniformly bounded in } L^2(\Omega_{\epsilon})\label{u0BBound}\\
            &p^{\nu} \text{ uniformly bounded in } L^1((0,T), L^1(\Omega_{\epsilon})).\label{pBBound}
        \end{align}  
    Then, the limit $(\bu,p)$ is a weak Euler solution on $\Omega\times (0,T)$,  and $\btau^{\nu}_w,$
    $p^{\nu}_w\mathbf{n}$ have limits as surface distributions, i.e.
    \be 
        \btau^{\nu}_w \xrightarrow{\nu\to0} \btau_w \text{ in } D'((\partial B)_T, \mathcal{T}(\partial B)_T) \lb{tau-lim} 
    \ee
    \be 
        p^{\nu}_w\mathbf{n} \xrightarrow{\nu\to0} p_w\mathbf{n} \text{ in } D'((\partial B)_T, \mathcal{N}(\partial B)_T) \lb{p-lim} 
    \ee 
\end{theorem}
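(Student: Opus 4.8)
The plan is to extract from the Navier--Stokes momentum balance a single distributional identity, valid for \emph{every} test field in the enlarged class $\Bar{D}(\Bar{\Omega}\times(0,T))$, in which the only boundary contributions are exactly $\btau^\nu_w$ and $p^\nu_w\bn$, and then to pass to the limit $\nu\to0$. Concretely, fix a vector test field $\bphi\in\Bar{D}(\Bar{\Omega}\times(0,T))$, pair it against \eqref{NS1}, integrate over $\Omega\times(0,T)$, and integrate by parts in space and time. Since $\bphi$ is the restriction of a field compactly supported in $\mathbb{R}^3\times(0,T)$ there is no contribution from $|\bx|\to\infty$ or from $t\in\{0,T\}$; since $\bu^\nu|_{\partial B}=\bzed$ the boundary term produced by the convective flux $\bu^\nu\otimes\bu^\nu$ drops out; and the boundary terms produced by the pressure and the viscous Newtonian stress are precisely $-\int p^\nu_w\,\bphi\bdot\bn$ and $\int\btau^\nu_w\bdot\bphi$, by \eqref{pdist}, \eqref{taudist} and the fact that strong solutions possess the required traces at $\partial B$. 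A second integration by parts in the viscous bulk term, again using $\bu^\nu|_{\partial B}=\bzed$, removes its boundary contribution and leaves, for each $\nu>0$,
\begin{align*}
\int_0^T\!\!\int_{\partial B}\!\big(\btau^\nu_w\bdot\bphi - p^\nu_w\,\bphi\bdot\bn\big)\,dS\,dt
&= \int_0^T\!\!\int_{\Omega}\!\big(\bu^\nu\bdot\partial_t\bphi + (\bu^\nu\otimes\bu^\nu)\bdots\grad\bphi + p^\nu\,\diver\bphi\big)\,d\bx\,dt \\
&\quad + \nu\int_0^T\!\!\int_{\Omega}\bu^\nu\bdot\triangle\bphi\,d\bx\,dt .
\end{align*}

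Next I would let $\nu\to0$ in this identity. The last term is $O(\nu)$: the field $\bu^\nu$ is bounded, uniformly in $\nu$, in $L^2$ over $(0,T)\times(\mathrm{supp}\,\bphi\cap\Bar{\Omega})$ --- by \eqref{L2Conv} on the part of the support at positive distance from $\partial B$ and by \eqref{uBBound} on the part within a layer $\Omega_\delta:=\{dist(\bx,\partial B)<\delta\}$ near it --- so that term is $\le C\nu\to0$. For the bulk terms one splits $\Omega$ into $\{dist(\bx,\partial B)\ge\delta\}$, a compact subset of $\Omega$ on which \eqref{L2Conv} gives $\bu^\nu\to\bu$ strongly in $L^2$, hence $\bu^\nu\otimes\bu^\nu\to\bu\otimes\bu$ strongly in $L^1$, and \eqref{pL1Conv} gives $p^\nu\to p$ strongly in $L^1$, and the layer $\Omega_\delta$ ($0<\delta<\epsilon$), where \eqref{uBBound}, \eqref{pBBound} furnish uniform $L^2$, $L^1$, $L^1$ bounds on $\bu^\nu$, $\bu^\nu\otimes\bu^\nu$, $p^\nu$. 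Specializing first to $\bphi\in C_c^\infty(\Omega\times(0,T))$, where the surface and layer terms are absent, the identity passes to the limit and gives the weak Euler momentum equation on $\Omega\times(0,T)$; together with $\diver\bu=0$ (the distributional limit of $\diver\bu^\nu=0$) this shows that $(\bu,p)$ is a weak Euler solution. For general $\bphi\in\Bar{D}$, the limit of the right-hand side is $-\langle\bT,\bphi\rangle$ (up to the near-wall defect discussed below), where $\bT:=\partial_t\bu+\grad\bdot(\bu\otimes\bu+p\bI)$ is well defined on $\Bar{D}(\Bar{\Omega}\times(0,T))$ --- because $\bu\in L^2((0,T),L^2(\Omega_\epsilon))$ and $p\in L^1((0,T),L^1(\Omega_\epsilon))$ by Fatou --- and is supported on $(\partial B)_T$ because it vanishes on $\Omega\times(0,T)$. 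Hence the boundary integrals on the left converge.

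It remains to split the limiting boundary functional into tangential and normal parts. The restriction map $\bphi\mapsto\bphi|_{(\partial B)_T}$ carries $\Bar{D}(\Bar{\Omega}\times(0,T))$ onto $C_c^\infty((\partial B)_T,\mathbb{R}^3)$ (extend a smooth field on the embedded submanifold $\partial B$ to $\mathbb{R}^3$ and multiply by a spatial cutoff), and the limiting functional is independent of the choice of extension, since two extensions of a given boundary field make the left-hand side of the identity coincide for every $\nu$. Moreover the left-hand side manifestly depends on $\bphi$ only through $\bphi|_{(\partial B)_T}$, so the same is true of $\langle\bT,\bphi\rangle$; thus $\bT$ defines a distribution on $(\partial B)_T$ with a well-defined $\mathbb{R}^3$-valued density, which splits orthogonally. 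Testing against $\bphi$ tangential on $(\partial B)_T$ then isolates $\btau^\nu_w$, and testing against $\bphi$ normal isolates $-p^\nu_w\bn$; hence both $\btau^\nu_w$ and $p^\nu_w\bn$ converge in the stated spaces, which is \eqref{tau-lim} and \eqref{p-lim}. That the limits are genuine distributions, continuous on test sections, follows from the sequential completeness of $D'$ (Banach--Steinhaus).

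The step I expect to be the main obstacle is the passage to the limit in the bulk nonlinear term $\int_0^T\!\int_{\Omega}(\bu^\nu\otimes\bu^\nu)\bdots\grad\bphi$ --- and, similarly, in $\int_0^T\!\int_\Omega p^\nu\,\diver\bphi$ --- on the layer $\Omega_\delta$, where \eqref{uBBound}, \eqref{pBBound} provide only uniform $L^1$ control and no strong convergence, so that a quadratic nonlinearity need not pass to the limit. The plan is to use \eqref{uBBound}, \eqref{pBBound} to extract weak-$*$ limits of $\bu^\nu\otimes\bu^\nu\,d\bx\,dt$ and $p^\nu\,d\bx\,dt$ as (matrix- and scalar-valued) Radon measures on $\overline{\Omega_\epsilon}\times[0,T]$; the interior strong convergence \eqref{L2Conv}, \eqref{pL1Conv} identifies these limits with $\bu\otimes\bu\,d\bx\,dt$ and $p\,d\bx\,dt$ away from $(\partial B)_T$, so that any discrepancy is a defect measure carried by $(\partial B)_T$, and it is exactly this defect --- together with the trace of the interior Euler stress --- that the limiting boundary distributions $\btau_w$ and $p_w\bn$ encode. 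Establishing that this defect is registered consistently, i.e.\ that the left-hand side of the identity genuinely converges in $D'((\partial B)_T)$ rather than being merely bounded there, by exploiting the no-slip condition together with the uniform near-wall bounds, is the crux; everything else --- the bundle geometry of $(\partial B)_T$, the extension and cutoff of test sections, and the interior regularity needed to justify the integrations by parts for strong solutions --- is routine.
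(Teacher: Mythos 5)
Your overall approach mirrors the paper's: test the Navier--Stokes equation against fields in the enlarged class $\Bar{D}(\Bar{\Omega}\times(0,T))$ so that the integration by parts produces exactly the wall stresses $\btau^\nu_w$ and $p^\nu_w\bn$, then pass to the limit $\nu\to0$ in the bulk terms using the near-wall uniform bounds to control the layer. The one structural difference is that you work with an arbitrary $\bphi$ and split into tangential and normal at the end, whereas the paper fixes from the outset an extension that is tangential (resp.\ normal) on $(\partial B)_T$, so only one boundary term survives in each identity \eqref{local_momentum}, \eqref{local_momentum_n}. Your observation that the left-hand boundary functional depends only on $\bphi|_{(\partial B)_T}$, so that the limit is independent of the chosen extension, is correct and is essentially what the paper records at the close of its Section \ref{sec3.2}; your appeal to Banach--Steinhaus for barrelled spaces to obtain continuity of the limit is also valid (the paper instead verifies the seminorm estimates directly).

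Where your proposal leaves a genuine gap is precisely the step you flag as ``the crux''. You propose to handle the near-wall passage to the limit by extracting weak-$*$ limits of $\bu^\nu\otimes\bu^\nu\,d\bx\,dt$ and $p^\nu\,d\bx\,dt$ as Radon measures and then tracking a defect concentrated on $(\partial B)_T$; you do not close this step. But the defect-measure machinery is unnecessary, and the point you are missing is the simple content of the paper's Lemma \ref{lemma1}: a test function $\bphi\in\Bar{D}(\Bar{\Omega}\times(0,T))$ is smooth up to the boundary, hence bounded, so for any $f^\nu$ uniformly bounded in $L^p((0,T),L^p(\Omega_\epsilon))$ and any $\delta<\epsilon$, H\"older gives
\[
\left|\int_0^T\!\!\int_{\Omega_\delta}\bphi\, f^\nu\, dV\, dt\right|
\ \le\ \norm{\bphi}_{L^{p'}((0,T)\times\Omega_\delta)}\,\sup_{\nu>0}\norm{f^\nu}_{L^p((0,T)\times\Omega_\epsilon)},
\]
and $\norm{\bphi}_{L^{p'}((0,T)\times\Omega_\delta)}\to 0$ as $\delta\to 0$ simply because $|\Omega_\delta|\to 0$. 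Thus the near-wall layer contributes a quantity that is small \emph{uniformly in $\nu$}; no mass can concentrate on $(\partial B)_T$ in the bulk integrals, and the right-hand side of your identity converges exactly to $\dbangle{\bu}{\partial_t\bphi}+\dangle{\bu\otimes\bu\bdots\grad\bphi}+\dbangle{p}{\grad\bdot\bphi}$ with no singular defect. Applied to $f^\nu\in\{\bu^\nu,\ \bu^\nu\otimes\bu^\nu,\ p^\nu\}$ with $p=2,1,1$, this closes your argument along the lines the paper follows.
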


\begin{remark}\lb{remark1} 
This theorem is analogous to Proposition 4 of Duchon \& Robert \cite{duchon2000inertial} who proved 
that the inviscid limit of the local dissipation in Leray solutions, or 
$\lim_{\nu\to 0}[\nu|\grad \bu^\nu|^2+D(\bu^\nu)],$ exists in the sense of space-time distributions, 
under similar assumptions as ours. 
The essential identities \eqref{local_momentum},\eqref{local_momentum_n} employed 
in our proof have been previously exploited to formulate error estimates for drag and lift forces, 
for the purpose of adaptive mesh refinement in numerical simulation; see \cite{hoffman2006simulation}, Eq.(25).
The assumption \eqref{L2Conv} on strong $L^2$ convergence of velocities is motivated by results established 
and reviewed in \cite{drivas2019remarks}, which provide physically reasonable conditions for such convergence 
in the case of interior flows in bounded domains. Our assumptions \eqref{uBBound}-\eqref{pBBound} on boundedness in a small 
$\epsilon$-neighborhood of the boundary are motivated by the similar assumptions in Theorem 1 of 
\cite{drivas2018nguyen}, but are much weaker and modelled on our hypotheses  \eqref{L2Conv},\eqref{pL1Conv}. 
The latter do not, of course, imply \eqref{uBBound}-\eqref{pBBound} because the $L_{\text{loc}}^p(\Omega)$
conditions in  \eqref{L2Conv},\eqref{pBulkBound} imply boundedness of $L^p(U)$-norms 
only for $U\subset\subset\Omega.$
\end{remark}

\begin{remark} The assumption \eqref{pL1Conv} on the pressure is much stronger than required. All that 
is needed is an hypothesis which guarantees that along a suitable subsequence of $\nu$, 
$p^{\nu}\to p\in L^1((0,T), L^1_{loc}(\Omega))$ distributionally.  For example, it 
would suffice to replace \eqref{pL1Conv} instead with the following:  
    \begin{align}
        &p^{\nu} \text{ is uniformly bounded in }L^{q}((0,T),L_{\text{loc}}^{q}(\Omega)),
        \mbox{ for some $q>1$}. \label{pBulkBound}
    \end{align}
The assumption \eqref{pBulkBound} means more precisely 
that there exists an increasing sequence of open sets $\Omega_k\subset\subset\Omega_{k+1}$
with $\cup_k \Omega_k=\Omega$ such that for each $k\geq 1$
\be \sup_{\nu>0}\norm{p^\nu}_{L^q((0,T),L^q(\Omega_k))}<\infty. \ee
Thus, by the Banach-Alaoglu theorem applied iteratively in $k,$ we 
can find for each $k$ a subsequence $(\nu^{(k)})$ so that $p^{\nu^{(k)}_j}\rightharpoonup p$
weakly in $L^q((0,T),L^q(\Omega_k))$ as $j\to\infty$ and such that $(\nu^{(k+1)})$ is a 
further subsequence of $(\nu^{(k)}).$ In that case, it is easy to see that the diagonal 
subsequence $\nu_j^*=\nu_j^{(j)}$ has $\lim_{j\to\infty} p^{\nu^*_j}= p$ weakly in $L^q((0,T),L^q(\Omega_k))$ 
for all $k\geq 1,$ thus also distributionally, and then $p\in L^q((0,T),L_{loc}^q(\Omega)).$
\end{remark}

\begin{remark}\lb{remark2}
The proof of Theorem \ref{theorem1} is based on the concept of an extension operator 
for smooth test functions on the boundary into the interior flow domain. 
To prove \eqref{tau-lim} we must consider test functions $\bpsi$
on $D'((\partial B)_T, \mathcal{T}^*(\partial B)_T),$ which are smooth sections of the cotangent bundle, 
and an extension is then a map 
$\textbf{Ext}: \bpsi\in D((\partial B)_T,\mathcal{T}^*(\partial B)_T) \mapsto 
\barphi \in \Bar{D}(\Bar{\Omega}\times (0,T),\mathbb{R}^3)$ which is linear and continuous 
in the appropriate sense, with the pointwise equality  
\be \barphi|_{(\partial B)_T}= (\textbf{Proj}_{\text{s}}\circ\biota_T)(\bpsi) \lb{T-extend} \ee
where $\biota_T$ is the natural inclusion map of the tangent bundle into its ambient Euclidean space:
    \begin{align}
        \biota_T : \mathcal{T}(\partial B)_T \to (\mathbb{R}^3\times\mathbb{R})\times(\mathbb{R}^3\times\mathbb{R})
    \end{align}
and $\textbf{Proj}_{\text{s}}$ is the projection onto the spatial vector component 
\begin{align} 
    \textbf{Proj}_{\text{s}}:(\mathbb{R}^{3}\times\mathbb{R})\times(\mathbb{R}^{3}\times\mathbb{R})&\to\mathbb{R}^{3}\\
    ((\mathbf{x},t),(\bu,v))&\mapsto \bu. 
\end{align}
We define similarly the projection $\textbf{Proj}_{\text{st}}$ onto the space-time component $(\bu,v).$ 
See section \ref{sec:prelim} where we define the set 
$\mathcal{E}_{\mathcal{T}}$ of such extensions and prove that it is non-empty, by 
constructing an explicit example. Likewise, the proof of \eqref{p-lim} requires 
the definition of a set $\mathcal{E}_{\mathcal{N}}$ consisting of continuous linear extensions 
$\textbf{Ext}: \bpsi\in D((\partial B)_T,\mathcal{N}^*(\partial B)_T) \mapsto 
\barphi \in \Bar{D}(\Bar{\Omega}\times (0,T),\mathbb{R}^3)$ which satisfy the analogous
pointwise equality as \eqref{T-extend} for smooth sections of the conormal bundle.  
\end{remark} 

The weak Euler solutions obtained in Theorem \ref{theorem1} are ``viscosity solutions'' resulting 
from the inviscid limit. Weak solutions are equivalent to ``coarse-grained solutions'' in the 
sense of \cite{drivas2018onsager}, with slight modifications made due to the 
presence of boundaries. As in \cite{drivas2018onsager}, we introduce the spatial coarse-graining operation 
\be
  f\in L^1_{{\rm loc}}(\Omega)\mapsto   \Bar{f}_{\ell}(\mathbf{x}) = \int_{\mathbb{R}^3}G_\ell(\mathbf{r})f(\mathbf{x}+\mathbf{r})\, V(d\mathbf{r}), \quad
  \mathbf{x}\in\Omega^{\ell}:=\Omega\setminus \Omega_\ell
\ee
with $G_{\ell}(\mathbf{r})\coloneqq \ell^{-3}G(\mathbf{r}/\ell)$ 
a standard mollifier, assumed supported on the unit ball for simplicity. To take into account 
the domain boundary, following \cite{bardos2018onsager,drivas2018nguyen} we introduce a smooth {\it window function}  $\theta_{h,\ell}:\mathbb{R}\mapsto [0,1]$, which is non-decreasing, 0 on $(-\infty, h],$ and 1 on $[h+\ell,\infty)$, 
with derivative $\norm{\theta_{h,\ell}'}_{L^{\infty}(\mathbb{R})}\le C\ell^{-1}$ for some constant $C$ independent of $h$ and $\ell$. 
We then denote $\eta_{h,\ell}(\bx) := \theta_{h,\ell}(d(\mathbf{x}))$, where $d$ is the distance function 
\be d(\bx):= \min_{\by\in \partial B}|\bx-\by| \ee 
noting that for $\bx\in \Omega_\epsilon$ with sufficiently small $\epsilon>0,$ $d(\bx)=|\bx-\pi(\bx)|$ for a unique 
choice $\pi(\bx)\in \partial B$ and $\grad d(\bx)=\bn(\pi(\bx)):=\bn(\bx).$ See \cite{bardos2018onsager,drivas2018nguyen} and also section \ref{sec:prelim}. If the Navier-Stokes momentum balance equation \eqref{NS1} is both coarse-grained and windowed, then for 
$\ell<h$ it yields:
\begin{align}
    \partial_t(\eta_{h,\ell}\Bar{\mathbf{u}}_{\ell}^{\nu}) + \grad\bdot(\eta_{h,\ell} \Bar{\mathbf{T}}_{\ell}^{\nu} + \eta_{h,\ell}\Bar{p}_{\ell}^{\nu}\mathbf{I}) =  \grad\eta_{h,\ell}\bdot\Bar{\mathbf{T}}_{\ell}^{\nu} + \Bar{p}_{\ell}^{\nu}\grad\eta_{h,\ell} + \nu \eta_{h,\ell}\triangle\Bar{\bu}_{\ell}^{\nu} \label{cgMomentum1}
\end{align}
where we have introduced the {\it advective stress tensor} $\Bar{\mathbf{T}}_{\ell}^{\nu} = \overline{\mathbf{u}^{\nu}\otimes\mathbf{u}^{\nu}}.$
The following result describes the inviscid limit: 

\begin{proposition}\label{prop1} 
Assume conditions (\ref{L2Conv})-(\ref{pBulkBound}) as in Theorem \ref{theorem1}. Then as $\nu\to0$, the coarse-grained momentum equation (\ref{cgMomentum1}) converges pointwise for $\mathbf{x}\in\Omega$ and distributionally for $t\in[0,T]$ to the following equation,
\begin{align}
    \partial_t(\eta_{h,\ell}\Bar{\mathbf{u}}_{\ell}) 
    +\grad\bdot(\eta_{h,\ell} \Bar{\mathbf{T}}_{\ell} + \eta_{h,\ell}\Bar{p}_{\ell}\mathbf{I}) 
    = \grad\eta_{h,\ell}\bdot \Bar{\mathbf{T}}_{\ell} + \Bar{p}_{\ell}\grad\eta_{h,\ell}.\label{coarseGrainEuler}
\end{align}
with $\Bar{\mathbf{T}}_{\ell} = \overline{\mathbf{u}\otimes\mathbf{u}}$ for the limiting Euler solution $(\bu,p)$ in 
Theorem \ref{theorem1}. The set of equations \eqref{coarseGrainEuler} for all $h>\ell>0$ are equivalent to the standard 
weak formulation of the momentum balance for incompressible Euler equations. 
\end{proposition}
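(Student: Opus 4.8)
The plan is to pass to the limit $\nu\to0$ in \eqref{cgMomentum1} term by term, at each fixed $\bx$, using that whenever the windowed coarse-grained fields are not identically zero the mollification ball $B_\ell(\bx)$ lies at a positive distance from $\partial B$; hence only the \emph{interior} convergence of $(\bu^\nu,p^\nu)$ is used. First I would record the elementary consequences of the hypotheses: by \eqref{L2Conv}, for every $K\subset\subset\Omega$ one has $\bu^\nu\to\bu$ in $L^2((0,T)\times K)$, so $(\bu^\nu)_\nu$ is bounded there and, by $\bu^\nu\otimes\bu^\nu-\bu\otimes\bu=(\bu^\nu-\bu)\otimes\bu^\nu+\bu\otimes(\bu^\nu-\bu)$ and Cauchy--Schwarz, $\bu^\nu\otimes\bu^\nu\to\bu\otimes\bu$ in $L^1((0,T)\times K)$; by \eqref{pL1Conv} (or, under the weaker \eqref{pBulkBound}, after a standard diagonal subsequence extraction, weak $L^q((0,T)\times K)$ convergence, which suffices for the linearly appearing pressure) one has $p^\nu\to p$ in $L^1((0,T)\times K)$. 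In particular $\bu\in L^2_{\mathrm{loc}}$, $\bu\otimes\bu\in L^1_{\mathrm{loc}}$, $p\in L^1_{\mathrm{loc}}$, and $\grad\bdot\bu=0$ distributionally. Finally, since $G_\ell$ and all its derivatives lie in $C_c^\infty(B_\ell)$, convolution in $\bx$ against such a kernel is continuous from $L^1_{\mathrm{loc}}$ (resp.\ $L^2_{\mathrm{loc}}$) of $(0,T)\times\Omega$ into $L^1((0,T))$ (resp.\ $L^2((0,T))$) at each $\bx$ with $\overline{B_\ell(\bx)}\subset\subset\Omega$.

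Next, fix $\bx\in\Omega$. If $d(\bx)<h$, then $\eta_{h,\ell}$ and $\grad\eta_{h,\ell}$ vanish on a neighbourhood of $\bx$ (support property of $\theta_{h,\ell}$), so both sides of \eqref{cgMomentum1} are identically zero. If $d(\bx)\ge h$, then $\overline{B_\ell(\bx)}\subset\subset\Omega$ since $h>\ell$, and the preceding estimates give, as functions of $t$: $\Bar{\bu}_\ell^\nu(\bx,\cdot)\to\Bar{\bu}_\ell(\bx,\cdot)$ in $L^2((0,T))$, hence $\partial_t\big(\eta_{h,\ell}\Bar{\bu}_\ell^\nu\big)(\bx,\cdot)\to\partial_t\big(\eta_{h,\ell}\Bar{\bu}_\ell\big)(\bx,\cdot)$ in $D'((0,T))$ by continuity of $\partial_t$; $\Bar{\bT}_\ell^\nu(\bx,\cdot)\to\Bar{\bT}_\ell(\bx,\cdot)$ in $L^1((0,T))$, each $\partial_{x_j}\Bar{\bT}_\ell^\nu(\bx,\cdot)$ (the convolution of $\bu^\nu\otimes\bu^\nu$ against a fixed $C_c^\infty(B_\ell)$ kernel) converging in $L^1((0,T))$, and likewise $\Bar{p}_\ell^\nu(\bx,\cdot)$ and $\partial_{x_j}\Bar{p}_\ell^\nu(\bx,\cdot)$ converge in $D'((0,T))$. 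Expanding $\grad\bdot(\eta_{h,\ell}\Bar{\bT}_\ell^\nu+\eta_{h,\ell}\Bar{p}_\ell^\nu\bI)$ by the product rule with the fixed coefficients $\eta_{h,\ell}(\bx),\grad\eta_{h,\ell}(\bx)$, one sees the whole left-hand divergence term and the first two right-hand terms of \eqref{cgMomentum1} converge; and $\|\triangle\Bar{\bu}_\ell^\nu(\bx,\cdot)\|_{L^2((0,T))}\le\|\triangle G_\ell\|_{L^2}\,\|\bu^\nu\|_{L^2((0,T)\times B_\ell(\bx))}$ is bounded uniformly in $\nu$, so $\nu\,\eta_{h,\ell}\triangle\Bar{\bu}_\ell^\nu(\bx,\cdot)\to0$. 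Assembling the terms, \eqref{cgMomentum1} converges to \eqref{coarseGrainEuler} in $D'((0,T))$, with $\Bar{\bT}_\ell=\overline{\bu\otimes\bu}$ the $\ell$-mollification of $\bu\otimes\bu$.

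For the equivalence with the weak Euler momentum balance, one direction is immediate: if $\partial_t\bu+\grad\bdot(\bu\otimes\bu+p\bI)=0$ in $D'(\Omega\times(0,T))$, then mollifying in $\bx$ (which commutes with distributional derivatives) yields $\partial_t\Bar{\bu}_\ell+\grad\bdot(\overline{\bu\otimes\bu}+\Bar{p}_\ell\bI)=0$ on $\Omega^\ell\times(0,T)$, and multiplying by the $t$-independent cutoff $\eta_{h,\ell}$, supported in $\{d\ge h\}\subset\Omega^\ell$ since $h>\ell$, and moving the product-rule terms $\grad\eta_{h,\ell}\bdot\overline{\bu\otimes\bu}+\Bar{p}_\ell\grad\eta_{h,\ell}$ to the right gives exactly \eqref{coarseGrainEuler}. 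Conversely, given $\bphi\in C_c^\infty(\Omega\times(0,T);\mathbb{R}^3)$ whose compact spatial support $K_0$ is at distance $\delta>0$ from $\partial B$, I would take any $\ell<\delta/2$ and $h\in(\ell,\delta/2)$, so that $\eta_{h,\ell}\equiv1$ and $\grad\eta_{h,\ell}\equiv0$ on a neighbourhood of $K_0$; there \eqref{coarseGrainEuler} reduces to $\partial_t\Bar{\bu}_\ell+\grad\bdot(\overline{\bu\otimes\bu}+\Bar{p}_\ell\bI)=0$; pairing with $\bphi$, integrating by parts in $(\bx,t)$ with no boundary contribution, using self-adjointness of the spatial mollifier ($\int\Bar{f}_\ell\,g\,d\bx=\int f\,\tilde{g}_\ell\,d\bx$, valid since for $\ell$ small all supports stay in a fixed compact subset of $\Omega$) to transfer the mollifier onto $\bphi$, and letting $\ell\to0$ with $\tilde{\bphi}_\ell\to\bphi$ in $C_c^\infty$ and $\bu\in L^2_{\mathrm{loc}}$, $\bu\otimes\bu\in L^1_{\mathrm{loc}}$, $p\in L^1_{\mathrm{loc}}$, one obtains $\int_0^T\!\!\int_\Omega\big(\bu\bdot\partial_t\bphi+\bu\otimes\bu:\grad\bphi+p\,\grad\bdot\bphi\big)\,d\bx\,dt=0$, the weak momentum identity for $(\bu,p)$ (incompressibility was noted above).

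Given the relatively strong hypotheses of Theorem \ref{theorem1}, no step here presents a genuine obstacle — it is essentially bookkeeping with Young's convolution inequality and the continuity of the spatial mollifier and of $\partial_t$. The one delicate point, and the sole place the full strength of \eqref{L2Conv} is used, is the identification $\Bar{\bT}_\ell=\overline{\bu\otimes\bu}$ rather than $\Bar{\bu}_\ell\otimes\Bar{\bu}_\ell$: this passage to the limit in a nonlinear product fails under mere weak $L^2$ convergence, and it is precisely the discrepancy between these two quantities that will carry the anomalous momentum transfer studied in the rest of the paper.
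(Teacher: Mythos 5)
Your proof is correct and is exactly the ``straightforward'' argument the paper has in mind; since the paper leaves the proof to the reader (and points to \cite{drivas2018onsager}, Section 2, for the final statement), you have simply filled in the expected details. The two places requiring genuine care — that the strong $L^2$ hypothesis \eqref{L2Conv} is what upgrades $\bu^\nu\otimes\bu^\nu\to\bu\otimes\bu$ in $L^1_{\mathrm{loc}}$ so that $\Bar{\bT}_\ell=\overline{\bu\otimes\bu}$ (not $\Bar{\bu}_\ell\otimes\Bar{\bu}_\ell$), and the choice of $h,\ell$ small relative to ${\rm dist}({\rm supp}\,\bphi,\partial B)$ to kill the window terms in the converse direction — are both handled correctly.
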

\noindent 
The proof of this proposition is straightforward and left to the reader. For the final statement,  
see \cite{drivas2018onsager}, Section 2. The importance of the proposition is that it identifies 
{\it nonlinear spatial flux of momentum} toward the wall at distance $h$ as 
\be -(\grad\eta_{h,\ell}\bdot \Bar{\mathbf{T}}_{\ell} + \Bar{p}_{\ell}\grad\eta_{h,\ell})
\in D'((0,T),C^\infty_c(\Omega)), \lb{mom-flux} \ee
where recall that $\grad\eta_{h,\ell}=\eta_{h,\ell}'(d(\bx))\bn(\pi(\bx)),$ when $h$ is sufficiently small. 

Our next main theorem states that this spatial flux of momentum (both its components wall-parallel and wall-normal)
matches onto the corresponding components of the limiting wall stress which were established in Theorem \ref{theorem1}. 
Since those inviscid limits were defined as sectional distributions of the tangent and normal bundles, we must 
identify momentum flux \eqref{mom-flux} with similar sectional distributions. To accomplish this, we use 
the idea of extensions in the proof of Theorem \ref{theorem1} to define e.g. 
$\mathbf{Ext}^*(\grad\eta_{h,\ell}\bdot\Bar{\mathbf{T}}_{\ell} + \Bar{p}_{\ell}\grad\eta_{h,\ell})
\in D'((\partial B)_T,{\mathcal T}(\partial B)_T)$ with $\mathbf{Ext}\in {\mathcal E}_T$ as 
$$ \langle \mathbf{Ext}^*(\grad\eta_{h,\ell}\bdot\Bar{\mathbf{T}}_{\ell} + \Bar{p}_{\ell}\grad\eta_{h,\ell}),\bpsi\rangle
= \langle \grad\eta_{h,\ell}\bdot\Bar{\mathbf{T}}_{\ell} + \Bar{p}_{\ell}\grad\eta_{h,\ell}, \mathbf{Ext}(\bpsi)\rangle$$ 
for all $\bpsi\in D((\partial B)_T,{\mathcal T}^*(\partial B)_T).$ The righthand side is meaningful and defines a 
sectional distribution of the tangent bundle because of regularity \eqref{mom-flux} and linearity and continuity 
of $\mathbf{Ext}\in {\mathcal E}_T.$  Likewise, with 
$\mathbf{Ext}\in {\mathcal E}_N$ one can define $\mathbf{Ext}^*(\grad\eta_{h,\ell}\bdot\Bar{\mathbf{T}}_{\ell} + \Bar{p}_{\ell}\grad\eta_{h,\ell}) \in D'((\partial B)_T,{\mathcal N}(\partial B)_T).$ 
For details, see Section \ref{sec:prelim}.

We then have: 
\begin{theorem}\label{theorem2} 
    Under the assumptions \eqref{L2Conv}-\eqref{pBBound} of Theorem 1, then for 
    $0<\ell<h$ and for all $\textbf{Ext}\in \mathcal{E}_T$          
    \be   - \lim_{h,\ell\to0}\textbf{Ext}^*(\grad\eta_{h,\ell}\bdot\Bar{\mathbf{T}}_{\ell} + \Bar{p}_{\ell}\grad\eta_{h,\ell}) = \btau_w \mbox{ in }D'((\partial B)_T, \mathcal{T}(\partial B)_T)\label{tangentialLimit} \ee
Likewise, for $0<\ell<h$ and for all $\textbf{Ext}\in {\mathcal E}_N$ 
    \be
        -\lim_{h,\ell\to0} \textbf{Ext}^*(\grad\eta_{h,\ell}\bdot\Bar{\mathbf{T}}_{\ell} + \Bar{p}_{\ell}\grad\eta_{h,\ell}) = -p_w\mathbf{n} \mbox{ in }D'((\partial B)_T, \mathcal{N}(\partial B)_T). \label{normalLimit} \ee
\end{theorem}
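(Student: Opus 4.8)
The plan is to deduce Theorem~\ref{theorem2} from Proposition~\ref{prop1} together with a single ``local momentum balance'' for the limiting Euler solution tested against functions that need not vanish on $\partial B$, namely the identities \eqref{local_momentum}, \eqref{local_momentum_n}, which express that $(\mathbf{u},p)$ satisfies the momentum equation with boundary traction $\btau_w-p_w\mathbf{n}$. Throughout I fix $\textbf{Ext}\in\mathcal{E}_T$ (for \eqref{tangentialLimit}; the case $\textbf{Ext}\in\mathcal{E}_N$ for \eqref{normalLimit} is identical), a test section $\bpsi$, and set $\barphi:=\textbf{Ext}(\bpsi)\in\Bar{D}(\Bar{\Omega}\times(0,T),\mathbb{R}^3)$; by construction $\langle\textbf{Ext}^*(\mathbf{g}),\bpsi\rangle=\langle\mathbf{g},\barphi\rangle$ for $\mathbf{g}\in D'((0,T),C^\infty_c(\Omega))$, the trace $\barphi|_{(\partial B)_T}$ is the ambient representative of $\bpsi$ via the extension property \eqref{T-extend}, and $\barphi,\partial_t\barphi,\grad\barphi$ are bounded, continuous and compactly supported on $\Bar{\Omega}\times(0,T)$.

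First I would rewrite the extended momentum flux using the coarse-grained Euler equation. By Proposition~\ref{prop1}, $(\mathbf{u},p)$ satisfies \eqref{coarseGrainEuler} for all $0<\ell<h$, and since $\eta_{h,\ell}$ vanishes on the $h$-neighbourhood of $\partial B$ the fields $\eta_{h,\ell}\Bar{\mathbf{u}}_\ell,\ \eta_{h,\ell}\Bar{\mathbf{T}}_\ell,\ \eta_{h,\ell}\Bar{p}_\ell$ vanish near $\partial\Omega$; hence pairing \eqref{coarseGrainEuler} with $\barphi$ and integrating by parts in $\bx$ and $t$ yields \emph{no} boundary contributions and
\[ \langle\textbf{Ext}^*(\grad\eta_{h,\ell}\bdot\Bar{\mathbf{T}}_\ell+\Bar{p}_\ell\grad\eta_{h,\ell}),\bpsi\rangle=-\int_0^T\!\!\int_\Omega\big[\eta_{h,\ell}\Bar{\mathbf{u}}_\ell\bdot\partial_t\barphi+(\eta_{h,\ell}\Bar{\mathbf{T}}_\ell+\eta_{h,\ell}\Bar{p}_\ell\mathbf{I})\bdots\grad\barphi\big]\,V(d\bx)\,dt. \]
Then I would let $h,\ell\to0$ with $\ell<h$: by \eqref{uBBound}, \eqref{pBBound} and Fatou one has $\mathbf{u}\in L^2((0,T),L^2(\Omega_\epsilon))$ and $p\in L^1((0,T),L^1(\Omega_\epsilon))$, so $\mathbf{u},\mathbf{u}\otimes\mathbf{u},p$ are integrable on $\text{supp}\,\barphi\cap\Bar{\Omega}$, and since $\Bar{\mathbf{u}}_\ell,\Bar{\mathbf{T}}_\ell,\Bar{p}_\ell$ are the $\ell$-mollifications of the fixed fields $\mathbf{u},\mathbf{u}\otimes\mathbf{u},p$, dominated convergence (using $\eta_{h,\ell}\to1$ a.e., $0\le\eta_{h,\ell}\le1$, and integrability on the shrinking set $\{d\le h+\ell\}$) gives $\eta_{h,\ell}\Bar{\mathbf{u}}_\ell\to\mathbf{u}$, $\eta_{h,\ell}\Bar{\mathbf{T}}_\ell\to\mathbf{u}\otimes\mathbf{u}$, $\eta_{h,\ell}\Bar{p}_\ell\to p$ strongly on $\text{supp}\,\barphi\cap\Bar{\Omega}$. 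Hence the left side above converges to $-\mathcal{I}(\barphi)$, where $\mathcal{I}(\barphi):=\int_0^T\!\int_\Omega[\mathbf{u}\bdot\partial_t\barphi+(\mathbf{u}\otimes\mathbf{u}+p\mathbf{I})\bdots\grad\barphi]\,V(d\bx)\,dt$.

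The identification step carries the weight. The identities \eqref{local_momentum}, \eqref{local_momentum_n} say that $\mathcal{I}(\barphi)=\langle\btau_w,\barphi|_{(\partial B)_T}\rangle-\langle p_w\mathbf{n},\barphi|_{(\partial B)_T}\rangle$ for every $\barphi\in\Bar{D}(\Bar{\Omega}\times(0,T),\mathbb{R}^3)$; the mechanism is to test \eqref{NS1} against $\barphi$, use $\mathbf{u}^\nu|_{\partial B}=\mathbf{0}$ to annihilate the advective surface term, so that after integrating by parts the right side becomes $-\langle p^\nu_w\mathbf{n},\barphi|_{(\partial B)_T}\rangle+\langle\btau^\nu_w,\barphi|_{(\partial B)_T}\rangle-\nu\int_0^T\!\int_\Omega\mathbf{u}^\nu\bdot\triangle\barphi\,V(d\bx)\,dt$, and then to send $\nu\to0$: the viscous correction vanishes by the uniform bound \eqref{uBBound}, the surface terms converge by \eqref{tau-lim}, \eqref{p-lim}, and the bulk terms converge to $\mathcal{I}(\barphi)$ by the near-wall control in \eqref{L2Conv}--\eqref{pBBound}. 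Granting this, \eqref{tangentialLimit} and \eqref{normalLimit} are immediate: when $\textbf{Ext}\in\mathcal{E}_T$ the trace $\barphi|_{(\partial B)_T}$ is tangent to $\partial B$, so $\langle p_w\mathbf{n},\barphi|_{(\partial B)_T}\rangle=0$ and $\langle\btau_w,\barphi|_{(\partial B)_T}\rangle=\langle\btau_w,\bpsi\rangle$ by \eqref{T-extend}, whence the displayed limit equals $-\langle\btau_w,\bpsi\rangle$; when $\textbf{Ext}\in\mathcal{E}_N$ the trace is normal, so $\langle\btau_w,\barphi|_{(\partial B)_T}\rangle=0$ and $\langle p_w\mathbf{n},\barphi|_{(\partial B)_T}\rangle=\langle p_w\mathbf{n},\bpsi\rangle$, whence the limit equals $\langle p_w\mathbf{n},\bpsi\rangle$. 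Since $\bpsi$ is arbitrary these are exactly \eqref{tangentialLimit} and \eqref{normalLimit}.

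The hard part is the bulk convergence $\int(\mathbf{u}^\nu\otimes\mathbf{u}^\nu+p^\nu\mathbf{I})\bdots\grad\barphi\to\int(\mathbf{u}\otimes\mathbf{u}+p\mathbf{I})\bdots\grad\barphi$ \emph{up to} the wall: away from $\partial B$ it follows from the strong convergences \eqref{L2Conv}, \eqref{pL1Conv}, but inside the layer $\Omega_\epsilon$ only the uniform bounds \eqref{uBBound}, \eqref{pBBound} are available, and one must exclude a concentration defect of the Reynolds and pressure stresses on $\partial B$; this is the momentum analogue of the boundary estimate of Duchon--Robert and is supplied by Theorem~\ref{theorem1}. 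I would also stress that the order of limits is essential: taking $\nu\to0$ first at fixed $h$, as Proposition~\ref{prop1} requires, produces the anomalous skin friction $\btau_w$, whereas taking $h\to0$ first at fixed $\nu>0$ would leave only the wall pressure (since $\mathbf{u}^\nu$ vanishes on $\partial B$), the two limits failing to commute by precisely $\btau_w$.
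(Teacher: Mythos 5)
Your proposal is correct and follows essentially the same route as the paper: test the coarse-grained Euler balance \eqref{coarseGrainEuler} against $\barphi=\textbf{Ext}(\bpsi)$ (no boundary terms arise since $\eta_{h,\ell}$ vanishes near $\partial\Omega$), pass to the limit $h,\ell\to 0$ using the strong near-wall $L^p$ convergence $\eta_{h,\ell}\Bar{f}_\ell\to f$, and then match the resulting bulk integrals against the formulas \eqref{taulim} and \eqref{pressure_lim} that define $\btau_w$ and $p_w\bn$ in the proof of Theorem~\ref{theorem1}. The only place you are slightly more informal than the paper is the passage to the limit of the windowed mollifications, where the paper isolates a short Lemma~\ref{lemma2} (splitting into $\Omega^\delta$ and $\Omega_\delta$, using the $L^p$-contractivity of mollification and absolute continuity of the integral) rather than a bare appeal to dominated convergence, since $\Bar{f}_\ell$ is not pointwise dominated by $|f|$; the idea you flag (integrability on the shrinking collar) is exactly what makes the lemma work.
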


\begin{remark}
This result is analogous to the second part of  Proposition 4 of Duchon \& Robert \cite{duchon2000inertial}, stating not only that $D(\bu)=\lim_{\nu\to 0}[\nu|\grad \bu^\nu|^2+D(\bu^\nu)]$ exists but also that it coincides with the ``inertial energy dissipation'' of \cite{duchon2000inertial}, Proposition 2, which 
defines it as a distributional limit of energy flux to vanishingly small length scales,
$D(\bu)=\lim_{\ell\to 0}D_\ell(\bu).$ In fact, our proof of Theorem \ref{theorem2} 
is a direct adaptation of the proof in \cite{duchon2000inertial}. 
\end{remark}

\begin{remark}
It is not geometrically natural that pressure stress should contribute to the cascade of wall-parallel 
momentum, as it apparently does in \eqref{tangentialLimit}. In fact, as previously noted, $\grad\eta_{h,\ell}
=\theta_{h,\ell}'\bn$ for sufficiently small $h,$ and the term $\Bar{p}_{\ell}\grad\eta_{h,\ell}$ 
should give vanishing contribution in the tangent bundle. This can be shown if we define a class 
of {\it natural extensions} $\tilde{\cE}_{\mathcal{T}}$ which consists of those $\textbf{Ext}\in\mathcal{E}_{\mathcal{T}}$ 
such that $\forall\bpsi\in D((\partial B)_T, \mathcal{T}^*(\partial B)_T)$, $\barphi = \textbf{Ext}(\bpsi)$ 
satisfies 
   \begin{align}
        \norm{\barphi\bdot\mathbf{n}}_{L^{\infty}((\Omega_{h+\ell}\backslash\Omega_h)\times(0,T))}\leq C\ell \label{extCond1}
    \end{align}
(possibly with $\ell/h$ bounded from below) for constant $C$ independent of $h,$ $\ell.$    
We show in Section \ref{sec:prelim} that $\tilde{\cE}_{\mathcal{T}}\neq \emptyset$ by explicit construction. 
We then obtain from the preceding theorem the following simple corollary: 
\end{remark}

\begin{corollary}\label{corollary1} 
   For $\textbf{Ext}\in\tilde{\cE}_{\mathcal{T}},$ then under the assumption 
   \eqref{pBBound} of Theorem 1, $\lim_{h,\ell\to0}\textbf{Ext}^*(\Bar{p}_{\ell}\grad\eta_{h,\ell}) = 0.$ 
   Thus, under all of the assumptions \eqref{L2Conv}-\eqref{pBBound} of Theorem 1, 
   \be 
        -\lim_{h,\ell\to0} \textbf{Ext}^*(\grad\eta_{h,\ell}\bdot\Bar{\mathbf{T}}_{\ell}) = \btau_w\text{ in }D'((\partial B)_T, \mathcal{T}(\partial B)_T)
    \ee 
    for any $\textbf{Ext}\in\tilde{\cE}_{\mathcal{T}}.$
\end{corollary}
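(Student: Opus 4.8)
The plan is to pair $\textbf{Ext}^*(\Bar p_\ell\grad\eta_{h,\ell})$ with an arbitrary test section $\bpsi\in D((\partial B)_T,\mathcal T^*(\partial B)_T)$, unravel the definition, and exploit the extra smallness \eqref{extCond1} built into $\tilde{\cE}_{\mathcal T}$. Writing $\barphi=\textbf{Ext}(\bpsi)\in\Bar D(\Bar\Omega\times(0,T),\mathbb R^3)$ and recalling that, for $h$ small enough that the support lies in $\Omega_\epsilon$, $\grad\eta_{h,\ell}=\theta_{h,\ell}'(d(\bx))\,\bn(\bx)$ is supported in the thin shell $\Omega_{h+\ell}\setminus\Omega_h$, one has
\[
\langle\textbf{Ext}^*(\Bar p_\ell\grad\eta_{h,\ell}),\bpsi\rangle=\int_0^T\!\!\int_{\Omega_{h+\ell}\setminus\Omega_h}\Bar p_\ell\,\theta_{h,\ell}'(d(\bx))\,(\bn\bdot\barphi)\,V(d\bx)\,dt.
\]
Since $\theta_{h,\ell}'$ is $O(\ell^{-1})$ in sup norm while $\bn\bdot\barphi$ is $O(\ell)$ in sup norm on the shell by \eqref{extCond1}, the two powers of $\ell$ cancel and the pairing is bounded, uniformly in $h,\ell$, by a constant depending only on $\bpsi$ and $\textbf{Ext}$ times $\int_0^T\int_{\Omega_{h+\ell}\setminus\Omega_h}|\Bar p_\ell|\,V(d\bx)\,dt$. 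Hence it suffices to show this last quantity tends to $0$ as $h,\ell\to0$ with $\ell<h$.

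The next step is Young's inequality for the mollification: because $\int_{\mathbb R^3}G_\ell=1$ and $G_\ell$ is supported in the ball of radius $\ell$, Fubini gives $\int_{\Omega_{h+\ell}\setminus\Omega_h}|\Bar p_\ell|\,V(d\bx)\le\int_{\Omega_{h+2\ell}\setminus\Omega_{h-\ell}}|p|\,V(d\by)$, the constraint $\ell<h$ ensuring that $\bx+\br$ remains in $\Omega$ and the shifted shell stays inside $\Omega_\epsilon$ and bounded away from $\partial B$. Thus the pairing is controlled by $\int_0^T\int_{\Omega_{h+2\ell}}|p|\,V(d\by)\,dt$. The only substantive input now is that the limiting pressure is integrable up to the wall, $p\in L^1((0,T)\times\Omega_\epsilon)$: combining the uniform bound \eqref{pBBound} with the strong local convergence $p^\nu\to p$ in $L^1((0,T),L^1_{\mathrm{loc}}(\Omega))$, for any compact $K\subset\Omega$ with $K\subset\Omega_\epsilon$ one has $\|p\|_{L^1((0,T)\times K)}=\lim_\nu\|p^\nu\|_{L^1((0,T)\times K)}\le\sup_\nu\|p^\nu\|_{L^1((0,T)\times\Omega_\epsilon)}<\infty$, and exhausting $\Omega_\epsilon$ by such $K$ and invoking monotone convergence yields $p\in L^1((0,T)\times\Omega_\epsilon)$. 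Since $\bigcap_{\delta>0}\Omega_\delta=\emptyset$, dominated convergence with dominating function $|p|\,\mathbf 1_{(0,T)\times\Omega_\epsilon}$ gives $\int_0^T\int_{\Omega_{h+2\ell}}|p|\to0$, and therefore $\lim_{h,\ell\to0}\textbf{Ext}^*(\Bar p_\ell\grad\eta_{h,\ell})=0$ in $D'((\partial B)_T,\mathcal T(\partial B)_T)$.

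The second assertion is then immediate. Since $\tilde{\cE}_{\mathcal T}\subset\cE_{\mathcal T}$, Theorem \ref{theorem2} applies to any $\textbf{Ext}\in\tilde{\cE}_{\mathcal T}$ and gives $-\lim_{h,\ell\to0}\textbf{Ext}^*(\grad\eta_{h,\ell}\bdot\Bar{\mathbf T}_\ell+\Bar p_\ell\grad\eta_{h,\ell})=\btau_w$; subtracting the vanishing pressure contribution just established and using linearity of $\textbf{Ext}^*$ and of the distributional limit gives $-\lim_{h,\ell\to0}\textbf{Ext}^*(\grad\eta_{h,\ell}\bdot\Bar{\mathbf T}_\ell)=\btau_w$. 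I expect the main obstacle to be purely the boundary bookkeeping: checking that for $\ell<h$ the mollification shell shifted by $|\br|\le\ell$ stays in $\Omega_\epsilon$ and away from $\partial B$, and that the inviscid-limit pressure $p$ is genuinely an $L^1$ function up to the wall (carrying no singular part on $\partial B$), rather than any delicate analysis — once those are in hand the result follows from the cancellation of the $\ell^{-1}$ in $\theta_{h,\ell}'$ against the $O(\ell)$ normal component designed into $\tilde{\cE}_{\mathcal T}$, together with the vanishing $L^1$ mass of $p$ on the collapsing shell.
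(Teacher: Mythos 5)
Your proof is correct and follows essentially the same route as the paper: expand the pairing, localize to the shell $\Omega_{h+\ell}\setminus\Omega_h$, cancel the $O(\ell^{-1})$ from $\theta_{h,\ell}'$ against the $O(\ell)$ normal component guaranteed by \eqref{extCond1}, bound the remaining factor by $\|p\|_{L^1((0,T)\times\Omega_{3h})}$ via Young/Fubini, and conclude by dominated convergence; then subtract from the conclusion of Theorem \ref{theorem2}. The only difference is that you re-derive the near-wall integrability $p\in L^1((0,T)\times\Omega_\epsilon)$ from \eqref{pBBound} and \eqref{pL1Conv}, whereas the paper has already established this in the proof of Theorem \ref{theorem1} via Lemma \ref{lemma1} (Eq.\ \eqref{nearBoundary}).
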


Finally, we establish sufficient conditions for vanishing cascade of momentum to the wall via spatial advection:
\begin{proposition}\label{proposition2} 
    Assume that $\mathbf{u} \in L^2((0,T),L^2_{{\rm loc}}(\Omega))$ so that $\bT_\ell=\overline{\bu\otimes\bu}$
    is well-defined. Assume further for some $\epsilon>0$ the boundedness property in the vicinity of the wall
        \begin{align}
            \mathbf{u} \in L^2((0,T),L^{\infty}(\Omega_{\epsilon}))\label{wallboundedness}
        \end{align}
    and vanishing wall-normal velocity at the boundary in the sense
        \begin{align}
            \lim_{\delta\to0}\norm{\mathbf{n}\bdot\mathbf{u}}_{L^2((0,T),L^{\infty}(\Omega_{\delta}))} = 0. \label{wallnormal}
        \end{align}
     Then, for all $\textbf{Ext}\in\cE_{\mathcal{T}},$
         \be 
         \lim_{h,\ell\to 0} \textbf{Ext}^*(\grad\eta_{h,\ell}\bdot\Bar{\mathbf{T}}_{\ell}) = \bzed \mbox{ in } D'((\partial B)_T, \mathcal{T}(\partial B)_T)
         \lb{prop1a} 
         \ee 
     and for all $\textbf{Ext}\in {\mathcal E}_N,$
         \be 
         \lim_{h,\ell\to 0} \textbf{Ext}^*(\grad\eta_{h,\ell}\bdot\Bar{\mathbf{T}}_{\ell}) = \bzed, \mbox{ in } D'((\partial B)_T, \mathcal{N}(\partial B)_T). 
         \lb{prop1b} 
         \ee 
\end{proposition}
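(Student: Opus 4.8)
The plan is to test the momentum flux against a fixed extension of a boundary test function and to reduce everything to an elementary estimate on the thin shell supporting $\grad\eta_{h,\ell}$; the Euler equation itself plays no role here, only the structure of $\overline{\bu\otimes\bu}$ and the hypotheses \eqref{wallboundedness}--\eqref{wallnormal}.

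First I would fix $\textbf{Ext}\in\mathcal{E}_{\mathcal{T}}$ and $\bpsi\in D((\partial B)_T,\mathcal{T}^*(\partial B)_T)$, set $\barphi:=\textbf{Ext}(\bpsi)\in\bar{D}(\bar{\Omega}\times(0,T),\mathbb{R}^3)$ (the restriction of a $C_c^\infty(\mathbb{R}^3\times(0,T))$ field, hence bounded with compact space-time support), and note that, by the definition of $\textbf{Ext}^*$, it suffices to show $\langle\grad\eta_{h,\ell}\bdot\bar{\bT}_\ell,\barphi\rangle\to0$ as $h,\ell\to0$ with $0<\ell<h$. Since $\grad\eta_{h,\ell}=\theta_{h,\ell}'(d(\bx))\,\bn(\bx)$, the integrand is supported on the shell $S_{h,\ell}:=\Omega_{h+\ell}\setminus\Omega_h$, where $|\theta_{h,\ell}'|\le C/\ell$, and because $\partial B$ is a compact $C^\infty$ surface the tubular-neighbourhood coordinates give $|S_{h,\ell}|\le C'\ell$ for all small $h,\ell$; combining these with the boundedness of $\barphi$,
\[
\big|\langle\grad\eta_{h,\ell}\bdot\bar{\bT}_\ell,\barphi\rangle\big|\ \le\ C C'\,\norm{\barphi}_{L^\infty}\int_0^T \sup_{\bx\in S_{h,\ell}}\big|\bn(\bx)\bdot\bar{\bT}_\ell(\bx,t)\big|\,dt.
\]

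Next I would estimate $\bn(\bx)\bdot\bar{\bT}_\ell(\bx,t)$ on $S_{h,\ell}$. Writing its $i$-th component as $\int G_\ell(\br)\,[n_j(\bx)\,u_j(\bx+\br,t)]\,u_i(\bx+\br,t)\,V(d\br)$ and decomposing $n_j(\bx)\,u_j(\bx+\br,t)=(\bn\bdot\bu)(\bx+\br,t)+[n_j(\bx)-n_j(\bx+\br)]\,u_j(\bx+\br,t)$, I use two facts: $\bn=\grad d$ is $C^\infty$, hence Lipschitz with some constant $L$, in a fixed neighbourhood of $\partial B$, so $|n_j(\bx)-n_j(\bx+\br)|\le L\ell$ for $|\br|\le\ell$; and, for $\bx\in S_{h,\ell}$, $|\br|\le\ell$ with $\ell<h$ and $h,\ell$ small, one has $\bx+\br\in\Omega_{h+2\ell}\subseteq\Omega_\epsilon$. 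With $G_\ell\ge0$ and $\int G_\ell=1$ this gives, up to a dimensional constant,
\[
\sup_{\bx\in S_{h,\ell}}\big|\bn(\bx)\bdot\bar{\bT}_\ell(\bx,t)\big|\ \le\ C\Big(\norm{\bn\bdot\bu(\cdot,t)}_{L^\infty(\Omega_{h+2\ell})}+\ell\,\norm{\bu(\cdot,t)}_{L^\infty(\Omega_\epsilon)}\Big)\norm{\bu(\cdot,t)}_{L^\infty(\Omega_\epsilon)}.
\]
Substituting into the previous bound and applying Cauchy--Schwarz in $t$,
\[
\big|\langle\grad\eta_{h,\ell}\bdot\bar{\bT}_\ell,\barphi\rangle\big|\ \le\ C''\norm{\barphi}_{L^\infty}\Big(\norm{\bn\bdot\bu}_{L^2((0,T),L^\infty(\Omega_{h+2\ell}))}+\ell\,\norm{\bu}_{L^2((0,T),L^\infty(\Omega_\epsilon))}\Big)\norm{\bu}_{L^2((0,T),L^\infty(\Omega_\epsilon))}.
\]
The rightmost factor is finite by \eqref{wallboundedness}; the bracketed term vanishes as $h,\ell\to0$, because $\norm{\bn\bdot\bu}_{L^2((0,T),L^\infty(\Omega_{h+2\ell}))}\to0$ by \eqref{wallnormal} (since $h+2\ell\to0$) and $\ell\to0$. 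This proves \eqref{prop1a}; replacing $\mathcal{E}_{\mathcal{T}}$ and $\mathcal{T}^*(\partial B)_T$ by $\mathcal{E}_N$ and $\mathcal{N}^*(\partial B)_T$ leaves the computation unchanged and gives \eqref{prop1b}.

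I expect the main obstacle to be the crux estimate on $\bn\bdot\bar{\bT}_\ell$: the normal in $\grad\eta_{h,\ell}$ is frozen at the evaluation point $\bx$, whereas $\bar{\bT}_\ell(\bx,t)$ averages $\bu\otimes\bu$ over the ball $B_\ell(\bx)$, so the no-flow-through hypothesis --- which controls $(\bn\bdot\bu)(\by)$ only with $\bn$ evaluated at $\by$ --- cannot be applied directly. Trading $\bn(\bx)$ for $\bn(\bx+\br)$ inside the average costs an $O(\ell)$ error from the smoothness of the distance function near the compact surface $\partial B$, and this error is exactly compensated by the $\ell^{-1}$ blow-up of $\theta_{h,\ell}'$ against the $O(\ell)$ volume of $S_{h,\ell}$. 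The remainder is bookkeeping: keeping all suprema on the correct nested neighbourhoods $\Omega_{h+2\ell}\subset\Omega_\epsilon\subset\{d<\mathrm{reach}(\partial B)\}$, which is where both the constraint $\ell<h$ and the smallness of $h$ are used.
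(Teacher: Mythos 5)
Your proof is correct and follows essentially the same route as the paper's: integrate against $\barphi=\textbf{Ext}(\bpsi)$, localize to the shell $\Omega_{h+\ell}\setminus\Omega_h$ where the $\ell^{-1}$ from $\theta_{h,\ell}'$ cancels the $O(\ell)$ shell volume, and then decompose $\bn(\pi(\bx))\bdot\overline{\bu\otimes\bu}_\ell$ by trading $\bn$ at $\bx$ for $\bn$ at $\bx+\br$ so that the hypotheses \eqref{wallboundedness}--\eqref{wallnormal} apply. Two small points of comparison: you control the normal increment by Lipschitz continuity of $\bn\circ\pi$, giving an explicit $O(\ell)$ error, whereas the paper uses uniform continuity and a free parameter $\delta$; both are valid since $\bn\circ\pi$ is $C^\infty$ on the compact closure of $\Omega_\epsilon$. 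More notably, you are careful to keep the $\bn\bdot\bu$ supremum over the shrinking neighbourhood $\Omega_{h+2\ell}$, which is exactly what makes \eqref{wallnormal} bite as $h,\ell\to0$; the paper's displayed bound writes $\Omega_\epsilon$ there, which as literally printed does not vanish in the limit (a harmless slip, since the intended domain is precisely the shrinking one you use). Your write-up is thus a slightly cleaner version of the same argument.
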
     
     
\begin{remark}
This result can be regarded as an analogue of Duchon \& Robert, \cite{duchon2000inertial} Proposition 3,
which showed that $\lim_{\ell\to 0}D_\ell(\bu)=0$ when the velocity field satisfies a regularity condition 
slightly stronger than $\bu\in L^3((0,T),B^{1/3,\infty}_3(\Omega)).$ Our assumption \eqref{wallnormal}
can be regarded as a corresponding assumption on continuity of the normal velocity at the wall, the 
importance of which has been recognized in prior work: see Remark 3.2 in \cite{bardos2018onsager}, assumption 1, 
Eq.(4.3b) of Theorem 4.1 in \cite{bardos2019onsager}, and assumption (12) of Theorem 1 in \cite{drivas2018nguyen}.
Our near-wall boundedness assumption \eqref{wallboundedness} is likewise 
motivated by assumption (11) of Theorem 1 
in \cite{drivas2018nguyen}, but requiring only $L^2$ rather than $L^3$ sense in time. 
\end{remark}

Combining Proposition \ref{proposition2} with Theorems \ref{theorem1} \& \ref{theorem2}, and 
Corollary \ref{corollary1} yields our main result:      
     
\begin{theorem}\label{theorem3}       
Make all of the assumptions \eqref{L2Conv}-\eqref{pBBound}  of Theorem 1, and assume further 
that the limiting weak Euler solution $(\bu,p)$ in that theorem satisfies the near-wall 
boundedness \eqref{wallboundedness} and vanishing wall-normal velocity \eqref{wallnormal} in Proposition \ref{proposition2}. 
Then, for all $\textbf{Ext}\in \tilde{\cE}_{\mathcal{T}},$
\be
-\lim_{h,\ell\to 0} \textbf{Ext}^*(\grad\eta_{h,\ell}\bdot\Bar{\mathbf{T}}_{\ell}) 
\,= \btau_w\,= \bzed \mbox{ in } D'((\partial B)_T, \mathcal{T}(\partial B)_T)
\lb{tauzed} 
\ee 
and for all $\textbf{Ext}\in {\mathcal E}_N,$
\be 
-\lim_{h,\ell\to 0} \textbf{Ext}^*(\Bar{p}_{\ell}\grad\eta_{h,\ell}) = -p_w\mathbf{n}, \mbox{ in } D'((\partial B)_T, \mathcal{N}(\partial B)_T). 
\lb{pcont} 
\ee 
where the distributions $\btau_w\in D'((\partial B)_T, \mathcal{T}(\partial B)_T),$ 
$p_w\bn\in D'((\partial B)_T, \mathcal{N}(\partial B)_T)$ are those obtained in Theorem \ref{theorem1}. 
\end{theorem}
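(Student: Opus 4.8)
The plan is to derive Theorem \ref{theorem3} by assembling results already in hand — Theorem \ref{theorem2}, Corollary \ref{corollary1}, and Proposition \ref{proposition2} — using only linearity of the pullback operators $\textbf{Ext}^*$ and uniqueness of distributional limits; no new estimates are required.

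First I would check that Proposition \ref{proposition2} applies to the limiting Euler solution $(\bu,p)$ produced by Theorem \ref{theorem1}. The hypothesis $\bu\in L^2((0,T),L^2_{\rm loc}(\Omega))$ there is immediate from \eqref{L2Conv}, so $\Bar{\mathbf{T}}_{\ell}=\overline{\bu\otimes\bu}$ is well defined, and the near-wall boundedness \eqref{wallboundedness} and no-flow-through condition \eqref{wallnormal} are exactly the extra assumptions placed on $(\bu,p)$ in the statement of Theorem \ref{theorem3}. Hence Proposition \ref{proposition2} supplies \eqref{prop1a} for every $\textbf{Ext}\in\cE_{\mathcal{T}}$ and \eqref{prop1b} for every $\textbf{Ext}\in {\mathcal E}_N$, all limits being taken in the regime $h,\ell\to 0$ with $0<\ell<h$.

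For the tangential assertion \eqref{tauzed} I would fix $\textbf{Ext}\in\tilde{\cE}_{\mathcal{T}}$. Since $\tilde{\cE}_{\mathcal{T}}\subseteq\cE_{\mathcal{T}}$ by construction, Corollary \ref{corollary1} gives $-\lim_{h,\ell\to 0}\textbf{Ext}^*(\grad\eta_{h,\ell}\bdot\Bar{\mathbf{T}}_{\ell})=\btau_w$, whereas \eqref{prop1a} gives $\lim_{h,\ell\to 0}\textbf{Ext}^*(\grad\eta_{h,\ell}\bdot\Bar{\mathbf{T}}_{\ell})=\bzed$ in $D'((\partial B)_T,\mathcal{T}(\partial B)_T)$; by uniqueness of the limit this forces $\btau_w=\bzed$, and reading the two displays together yields the full chain of equalities in \eqref{tauzed}. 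For the normal assertion \eqref{pcont} I would fix $\textbf{Ext}\in {\mathcal E}_N$ and use linearity of $\textbf{Ext}^*$ to write, for each $h>\ell>0$,
\[
\textbf{Ext}^*(\grad\eta_{h,\ell}\bdot\Bar{\mathbf{T}}_{\ell}+\Bar{p}_{\ell}\grad\eta_{h,\ell})
=\textbf{Ext}^*(\grad\eta_{h,\ell}\bdot\Bar{\mathbf{T}}_{\ell})+\textbf{Ext}^*(\Bar{p}_{\ell}\grad\eta_{h,\ell}).
\]
By \eqref{normalLimit} the left side converges to $p_w\bn$ in $D'((\partial B)_T,\mathcal{N}(\partial B)_T)$, while \eqref{prop1b} says the first term on the right converges to $\bzed$; subtracting, the second term converges and $-\lim_{h,\ell\to 0}\textbf{Ext}^*(\Bar{p}_{\ell}\grad\eta_{h,\ell})=-p_w\bn$, which is \eqref{pcont}.

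The argument is essentially bookkeeping, so I do not anticipate a substantive obstacle. The one point needing care is the compatibility of the limiting procedures: Theorem \ref{theorem2}, Corollary \ref{corollary1} and Proposition \ref{proposition2} all pass to the limit as $h,\ell\to 0$ with $0<\ell<h$ (and, where natural extensions intervene, possibly with $\ell/h$ bounded from below), so that the distributions $\btau_w$ and $p_w\bn$ — fixed objects independent of the approach to the limit — may be identified by comparing limits taken along one common admissible regime, and the subtraction of distributional limits in the last step is legitimate. Beyond this, and the already-noted verification that the extra hypotheses of Proposition \ref{proposition2} coincide with those added in Theorem \ref{theorem3}, there is nothing further to prove.
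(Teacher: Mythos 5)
Your proposal is correct and takes essentially the same route as the paper, which states explicitly that Theorem \ref{theorem3} follows by combining Proposition \ref{proposition2} with Theorem \ref{theorem1}, Theorem \ref{theorem2} and Corollary \ref{corollary1}, and then devotes the section to proving Proposition \ref{proposition2}. Your write-up simply makes the bookkeeping explicit — the containment $\tilde{\cE}_{\mathcal{T}}\subseteq\cE_{\mathcal{T}}$, uniqueness of distributional limits for the tangential case, and linearity of $\textbf{Ext}^*$ for the normal case — which matches the intent of the paper's one-line reduction.
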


\begin{remark}
The result \eqref{tauzed} implies that Taylor's conservation anomaly for tangential momentum, under the stated 
hypotheses, can be at most a ``weak anomaly''. Here we employ the terminology from \cite{bedrossian2019sufficient}
(also \cite{eyink2022Aonsager})  according to which $\btau_w^\nu$ is ``weakly anomalous'' if it vanishes 
as $Re\to\infty,$ but more slowly than it does for laminar flow where $\btau_w^\nu \propto 1/Re.$ Such a 
weak anomaly for tangential momentum conservation would imply that all drag in the inviscid limit arises 
from the ``form drag'' due to pressure stress \eqref{pcont} acting in the direction of the external flow $\mathbf{V}.$

There is a good deal of empirical evidence from experiments and numerical simulations 
which supports this picture. For example, in the experimental study \cite{achenbach1972experiments}
for high-Reynolds flow around a smooth sphere, $\btau_w^\nu\propto Re^{-1/2}$ in the front of the 
sphere, consistent with the boundary-layer theory of Prandtl \cite{prandtl1905flussigkeitsbewegung,
lighthill1963introduction,e2000boundary}, and vanishes a bit slower in the turbulent wake region 
after flow separation behind the sphere (see \cite{achenbach1972experiments},Fig.7(a)). The form drag 
from pressure stress thus becomes becomes dominant for very large Reynolds numbers (see \cite{achenbach1972experiments}, Fig.10). 
For flow through a straight, smooth-walled pipe, as reviewed in \cite{eyink2022Aonsager}, geometry does 
not permit wall pressure stress to act parallel to the mean flow direction %$\mathbf{V}$ 
and drag vanishes as $Re\to\infty.$   If, instead, the pipe walls are mathematically smooth but ``hydraullically rough'', 
then  form drag is again geometrically possible and it becomes dominant over the contribution 
from $\btau_w^\nu$ in the large-$Re$ limit; e.g. see \cite{busse2017reynolds}, Fig.10. 
For related evidence in many other flows, see \cite{cadot1997energy,eyink2021josephson}.

The only possible exception of which we are aware comes from a 2D numerical simulation 
of a vortex quadrupole impinging on a flat wall \cite{nguyenvanyen2018energy}. Evidence was 
presented in \cite{nguyenvanyen2018energy}, Figure 12, that the maximum vorticity at 
the wall in that flow scales $\sim Re,$ which would imply $\btau_w\neq \bzed$ at least at one point.
It is possible that our strong version \eqref{wallnormal} of the vanishing wall-normal velocity 
is invalid in this flow, since reference \cite{nguyenvanyen2018energy} reports ``a blow-up of the wall-normal 
velocity associated with an abrupt acceleration of fluid particles away from the wall,'' 
corresponding to explosive boundary-layer separation. Another possible reconciliation of our 
Theorem \ref{theorem3} with the numerical observations of \cite{nguyenvanyen2018energy} is that 
the nonzero $\btau_w$ values reported may occur as $\nu\to 0$ at only a zero-measure set of points of 
$\partial \Omega,$ so that still $\btau_w=\bzed$ in the sense of distributions
and $\lim_{\delta\to 0}{\rm ess.sup}_{\bx\in \Omega_\delta}|\bn\bdot\bu(\bx)|=0.$ 
\end{remark} 

\vspace{-15pt} 

\begin{remark}
On the other hand, the assumptions \eqref{wallboundedness}, \eqref{wallnormal} invoked 
in Theorem \ref{theorem3} imply the weak-strong uniqueness property for the resulting 
viscosity solutions of Euler equations, e.g. see \cite{wiedemann2018weak}. 
(We thank T. Drivas for insisting on this fact.) This result is immediate when the 
flow domain $\Omega$ is a bounded open set with $C^\infty$ boundary $\partial \Omega$
and if there is an incompressible Euler solution ${\bf U}\in C^\infty(\Omega\times [0,T)$) 
which satisfies ${\bf U}\bdot\bn=0$ everywhere on the boundary. In that case, we may consider 
${\bf U}$ as an extension $\barphi$ into $\Omega$ of a smooth section of the surface cotangent bundle
and from the proof of Theorem  \ref{theorem3} we obtain that the limiting viscosity 
solution $\bu$ must satisfy for a.e. $\tau\in (0,T)$
\begin{eqnarray}
\int_\Omega \left[\bu(\cdot,\tau)\bdot{\bf U}(\cdot,\tau)-\bu_0\bdot{\bf U}_0\right]\, dV   
=\int_0^\tau \int_\Omega [\partial_t{\bf U}\bdot\bu+\grad{\bf U}\bdots\bu\otimes\bu]\,dV\,dt.
\end{eqnarray}
weak-strong uniqueness for the admissable weak solution $\bu$ then follows by a remark of 
E. Feireisl recorded in \cite{wiedemann2018weak}, section 5. This argument may not apply
if ${\bf U}\bdot\bn\neq 0$ on part of the boundary (as for open flows through pipes), since 
the above equation then gets a surface contribution from the pressure $p$ of the weak solution.
This argument also does not apply for flow around a smooth finite body $B$ as discussed 
in the present paper, because the smooth Euler solution ${\bf U}$ will not generally be compactly supported 
in $\Omega$ and cannot be regarded as a smooth extension. However, we show in our companion 
paper \cite{quan2024onsager} that weak-strong uniqueness nevertheless holds by a relative energy 
argument when ${\bf U}$ is the potential Euler solution of d'Alembert and when assumptions 
\eqref{wallboundedness}, \eqref{wallnormal} of Theorem \ref{theorem3} hold $L^3$-in-time.
In particular, if initial data $\bu^\nu_0$ for the Navier-Stokes solution converges to ${\bf U}_0$
strong in $L^2(\Omega)$ (allowing a vanishing boundary layer to enforce stick conditions 
at the surface), then the limiting weak Euler solution $\bu$ must coincide with ${\bf U},$
unless the conditions \eqref{wallboundedness}, \eqref{wallnormal} are violated. It should
be emphasized that, in fact, it is the consequence $\btau_w=\bzed$ of Theorem \ref{theorem3}
which implies weak-strong uniqueness for viscosity weak solutions, even if $\btau_w=\bzed$ 
follows from assumptions weaker than \eqref{wallboundedness}, \eqref{wallnormal}. This 
statement agrees with a general result of Bardos \& Titi (\cite{bardos2013mathematics}, Theorem 4.1) which implies weak-strong uniqueness under the same condition $\btau_w=\bzed$ even for 
weak-* limits in $L^\infty((0,T),L^2(\Omega))$ of Navier-Stokes solutions $\bu^\nu.$ 
Since $\btau^\nu_w=\nu\bomega^\nu_w\btimes\bn,$ a thin enough boundary layer in the initial 
data $\bu^\nu_0$ may correspond to $\btau^\nu_w\sim O(1)$ in the surface vortex sheet and 
subsequent explosive separation of such a boundary layer may violate our assumptions \eqref{wallboundedness}, \eqref{wallnormal}
at early times. 
\end{remark}

\vspace{-5pt} 
\begin{remark} 
The result \eqref{pcont} of Theorem \ref{theorem3} is a statement that pressure is continuous  
at the wall in the inviscid limit, in the sense that the limit of zero distance to the wall and the limit of 
infinite Reynolds-number commute with each other. Such continuity helps to justify one of the 
fundamental assumptions in the theory of Prandtl \cite{prandtl1905flussigkeitsbewegung,
lighthill1963introduction,e2000boundary}, which posited that pressure would be continuous
across thin viscous boundary layers at solid walls.

This result has further important implications for turbulence modelling, because it suggests that the 
asymptotic drag arising from pressure forces might be calculated from Euler solutions in the fluid 
interior which arise from the infinite-$Re$ limit \cite{drivas2019remarks}, without the need 
to resolve small viscous lengths at the wall. To obtain the pressure field 
$p$ from the Euler solution velocity field $\bu$ involves the solution of a Poisson equation 
analogous to Eq.\eqref{NSpress}, and this requires suitable boundary conditions on the pressure. 
For smooth Euler solutions, the following Neumann problem is generally solved: 
\be -\triangle p=\grad\otimes\grad\bdots(\bu\otimes\bu), \ \bx\in \Omega;  \quad 
\frac{\partial p}{\partial n}= (\bu\otimes\bu)\bdots\grad\bn,\  \bx\in \partial\Omega, 
\lb{Epress} \ee 
where the latter condition arises from the normal component of the Euler equation at the wall, 
assuming $\bu\bdot\bn=0.$
Recently, in interesting work \cite{derosa2023double,derosa2024full} (following \cite{bardos2022c})
it has been shown, assuming a weak Euler solution in a bounded domain $\Omega$ with velocity 
$\bu\in C^\alpha(\Omega),$ $\alpha\in (0,1)$ and $\bu\bdot\bn=0$ on $\partial\Omega,$ that 
the pressure $p$ must satisfy the Neumann problem \eqref{Epress} in the weak form 
\be \int_\Omega [p\triangle\varphi+\bu\otimes\bu\bdots (\grad\otimes\grad)\varphi]\,dV 
=\int_{\partial \Omega} p \frac{\partial\varphi}{\partial n}\, dA, 
\quad\forall \varphi\in C^2(\Bar{\Omega}) \lb{Epress-wk} \ee
and, furthermore, that  there is a unique weak solution 
of \eqref{Epress-wk} with zero space-mean which is at least $C^{2\alpha}$ up to the boundary. 
This result offers hope that the drag on the body in the infinite Reynolds limit 
can be computed entirely from the limiting weak Euler solution.  
\end{remark} 

\begin{remark}
The methods of this paper can be applied to another fundamental cascade process in 
wall-bounded turbulence, which is the ``inverse cascade'' of vorticity away from the wall; 
e.g. see \cite{eyink2008turbulent,eyink2021josephson}. 
%\hl{This topic will be discussed in detail in another work }\cite{eyink2022Bonsager}. 
Here we just note a key result for inviscid-limit  
Euler solutions which follows directly from the considerations in the present paper: 
with the assumptions of Theorem \ref{theorem3},
then for all $\mathbf{Ext}\in {\mathcal E}_{\mathcal T},$ 
\be \lim_{h,\ell\to 0}
\mathbf{Ext}^*\left[\grad\eta_{\ell,h}\btimes\partial_t\Bar{\bu}_\ell + 
\grad\eta_{h,\ell}\btimes (\grad\bdot\Bar{\bT}_\ell)\right] 
=-(\bn\btimes\grad)p_w. \lb{vort-bal} \ee
The quantity on the righthand side of this equation is the {\it Lighthill vorticity source} 
\cite{lighthill1963introduction,morton1984generation,eyink2021josephson}, which describes the rate of generation 
of tangential vorticity due to pressure gradients at the body surface. The term involving 
$\Bar{\mathbf{T}}_\ell$ on the lefthand side represents a  spatial flux of vorticity away from the 
solid surface; e.g. see \cite{eyink2024onsager}. 
%see \cite{eyink2022Bonsager}. 
One might naively expect the Lighthill source 
to be in balance with this vorticity flux into the flow interior $\Omega.$ However, the time-derivative 
term has also a simple physical interpretation, representing the rate of change of a 
tangential vortex sheet of strength $\bn\btimes\bu$ at the body surface $\partial \Omega$ 
\cite{eyink2024onsager}. 
%\cite{eyink2022Bonsager}. 
The meaning of \eqref{vort-bal} is thus that vorticity generated at the surface by pressure gradients 
is either cascaded into the flow interior or else accumulates in the surface vortex sheet. 

It is worth sketching here at least briefly the derivation of this result. For any 
$\bpsi\in D((\partial B)_T,{\mathcal T}^*(\partial B)_T),$ let $\barphi=\mathbf{Ext}(\bpsi).$ 
Then it is not hard to show that $((\bn\btimes\grad)\bdot\bpsi)\bn\in D((\partial B)_T,{\mathcal N}^*(\partial B)_T)$
and that $(\bn\bdot(\grad\btimes\barphi))\bn\in \Bar{D}(\Bar{\Omega}\times (0,T),{\mathbb R}^3)$ 
extends this test section into the interior. Since 
\be -\langle (\bn\btimes\grad)p_w,\bpsi\rangle= \langle p_w\bn, ((\bn\btimes\grad)\bdot\bpsi)\bn\rangle \ee 
we obtain from \eqref{pcont} in Theorem \ref{theorem3} that 
\be -\langle (\bn\btimes\grad)p_w,\bpsi\rangle=\lim_{h,\ell\to 0} \int_0^T\int_\Omega
(\grad\btimes\barphi)\bdot\grad\eta_{h,\ell}\,\Bar{p}_\ell\,dV\,dt. \lb{step1} \ee 
On the other hand,
\begin{eqnarray}
&& \int_0^T\int_\Omega
(\grad\btimes\barphi)\bdot\grad\eta_{h,\ell}\,\Bar{p}_\ell\,dV\,dt
= -\int_0^T\int_\Omega
\barphi\bdot\grad\eta_{h,\ell}\btimes\grad\Bar{p}_\ell\,dV\,dt \cr
&& \hspace{10pt} = \int_0^T\int_\Omega\left[
-(\partial_t\barphi)\bdot\grad\eta_{h,\ell}\btimes\Bar{\bu}_\ell
+\barphi\bdot\grad\eta_{h,\ell}\btimes(\grad\Bar{\bT}_\ell)\right]\,dV\,dt 
\lb{step2} \end{eqnarray}
where in the final line we used the coarse-grained momentum balance 
\eqref{coarseGrainEuler}. Combining the two results \eqref{step1},\eqref{step2} yields exactly 
\eqref{vort-bal}, thus showing that the Lighthill theory of vorticity generation is valid 
even in the infinite Reynolds-number limit. The inviscid nature of vorticity production 
by tangential pressure gradients was already emphasized by Morton \cite{morton1984generation}. 
\end{remark} 

\begin{remark} 
A further application of the results of this work is given in the companion paper \cite{quan2024onsager}, 
where the infinite-Reynolds limit is
established for the Josephson-Anderson relation, which 
precisely relates vorticity flux from the body to drag \cite{eyink2021josephson}. That relation 
decomposes the velocity into a contribution $\bu_\phi=\grad\phi$ from the smooth, potential
Euler solution studied by d'Alembert \cite{dalembert1749theoria,dalembert1749theoria}
and the complementary contribution $\bu_\omega^\nu=\bu^\nu-\bu_\phi$ which represents 
the rotational fluid motions. Most importantly, this field satisfies an equation 
for conservation of ``rotational momentum" 
 \begin{equation} 
    \partial_t\mathbf{u}^{\nu}_\omega + \grad\bdot(\mathbf{u}^{\nu}_\omega\otimes\mathbf{u}^{\nu}_\omega 
    +\mathbf{u}^{\nu}_\omega\otimes\mathbf{u}_\phi+\mathbf{u}_\phi\otimes\mathbf{u}^{\nu}_\omega  
    + p^{\nu}_\omega\mathbf{I}) 
    - \nu\triangle\mathbf{u}^{\nu}_\omega = \bzed, \;\;\; \grad\bdot\mathbf{u}^{\nu}_\omega = 0,
    \quad \bx\in \Omega \label{NS1-om}
\end{equation}
subject to the boundary conditions     
\begin{equation} 
\mathbf{u}^{\nu}_\omega|_{\partial B} = 
 -\mathbf{u}_\phi|_{\partial B} , \;\;\; \mathbf{u}^{\nu}_\omega\underset{|\mathbf{x}|\to\infty}{\sim}\mathbf{0}.\label{NS2-om}
\end{equation} 
and with the pressure $p^\nu_\omega$ determined by the incompressibility constraint.  Of course, 
Eqs.\eqref{NS1-om},\eqref{NS2-om} are equivalent to the incompressible Navier-Stokes equations 
in their standard representation, Eqs.\eqref{NS1},\eqref{NS2}. Because the equations \eqref{NS1-om}
are conservation-type, they have a weak formulation and therefore all of the results of the present 
work are valid also for Eqs.\eqref{NS1-om},\eqref{NS2-om} and, in particular, the Theorems 
\ref{theorem1}-\ref{theorem3}. Note in this context that the weak Euler solutions obtained 
in the inviscid limit satisfy in distributional sense the equations 
 \begin{equation} 
    \partial_t\mathbf{u}_\omega + \grad\bdot(\mathbf{u}_\omega\otimes\mathbf{u}_\omega 
    +\mathbf{u}_\omega\otimes\mathbf{u}_\phi+\mathbf{u}_\phi\otimes\mathbf{u}_\omega  
    + p_\omega\mathbf{I}) 
    = \bzed, \;\;\; \grad\bdot\mathbf{u}_\omega = 0. 
    \quad \bx\in \Omega \label{E1-om}
\end{equation}
The resulting weak solutions $\bu=\bu_\phi+\bu_\omega$ of incompressible Euler equations 
in their standard form differ from the potential solution $\bu_\phi$ of d'Alembert, with 
non-vanishing vorticity corresponding to the rotational flow $\bu_\omega$ in the turbulent 
wake behind the solid body.
\end{remark}

\section{Preliminaries}\label{sec:prelim} 
In this section, we summarize our notations and conventions on differential geometry 
and introduce the concept of extensions that we employ in our proofs. 

\subsection{Manifolds and Vector Bundles Associated to a Smooth Body}\label{subsec:manifold} 
We consider a body $B$ that is a connected, compact domain in $\mathbb{R}^3,$ with 
$\Omega={\mathbb R}^3\backslash B$ also connected, and with common $C^{\infty}$ boundary 
$\partial B=\partial \Omega$. 
The boundary $\partial B$ is then a connected compact $C^{\infty}$ hypersurface in $\mathbb{R}^3$, 
which is thus a level set of a  $C^{\infty}$ function $f:B\to[0,\infty)$. That is, $\partial B = f^{-1}(0)$ and $\grad f(\bx)\ne\bzed$ for all $\bx\in\partial B$. 
By the Regular Level Set Theorem (\cite{lee2013smooth}, Corollary 5.14) the tangent space at any $\bx \in \partial B$ is given by 
\begin{align}
    \mathcal{T}_{\mathbf{x}}\partial B = \text{ker}(\grad f(\bx)) = (\grad f(\bx)\mathbb{R})^{\perp}. 
\end{align}
Furthermore, the vector field
\begin{align}
    \bn(\bx) = \frac{\grad f(\bx)}{|\grad f(\bx)|}
\end{align}
defines a unit normal vector of $\partial B$, and $\bn$ is also smooth on $\partial B$ by definition. See chapter 5 in \cite{lee2013smooth} for more details on submanifolds with a boundary. \\

Since $\partial \Omega$ is a compact $C^{\infty}$ submanifold of $\Omega$, there exists $\eta(\Omega)>0$ 
such that $\Omega_{\epsilon}$ for any $\epsilon<\eta(\Omega)$ is a neighborhood of $\partial B\subset \Omega$ with 
the \textit{unique nearest point property}:
for any $\bx\in\bar{\Omega}_{\epsilon}$, there exists a unique point $\pi(\bx)\in\partial B$ such that
${\rm dist}(\bx,\partial B) = |\bx - \pi(\bx)|$. The map $\pi:\Bar{\Omega}_{\epsilon}\to\partial B$ is called the \textit{projection} onto $\partial B$. One can show this projection map $\pi$ is $C^{\infty}$ using the Tubular Neighborhood Theorem. See chapter 6 in \cite{lee2013smooth}, and \cite{foote1984regularity, leobacher2021existence} for more details. Thus the distance function $d:\bar{\Omega}_{\epsilon}\to\mathbb{R}$ is a smooth function in $C^{\infty}(\bar{\Omega}_{\epsilon})$, and
\begin{align}
    d(\bx) = {\rm dist}(\bx,\partial B) = |\bx - \pi(\bx)|, \;\;\;\; \grad d(\bx) = \bn(\pi(\bx))
\end{align}
The latter result follows by using appropriate local coordinates: see \cite{guillemin1990geometric}. p.9. 
Finally, we observe that $\partial B$ is naturally Riemannian, with metric induced by the embedding
in Euclidean space. 

We need to consider also additional manifolds associated with $B.$ The first is the space-time 
manifold $(\partial B)_T := \partial B\times (0,T)$ with the product differentiable structure,
so that $\partial(\partial B)_T=\emptyset.$ Since $(\partial B)_T$ is a closed smooth hypersurface 
in ${\mathbb R}^3\times {\mathbb R},$ it is orientable and Riemannian. We consider also the associated 
{\it tangent bundle} ${\mathcal T}(\partial B)_T$  (\cite{dieudonne1972treatise}, 16.15.4; 
\cite{tomdieck2010algebraic}, section 15.6; \cite{lee2013smooth}, Proposition 3.18).  
As $(\partial B)_T$ is an embedded submanifold of $\mathbb{R}^3\times\mathbb{R}$, $\mathcal{T}(\partial B)_T\subset(\mathbb{R}^3\times\mathbb{R})\times(\mathbb{R}^3\times\mathbb{R})$
%\footnote{By Prop. 3.14 in \cite{lee2013smooth}, $T_{(\mathbf{x},t)}(\partial B)_T \cong \mathcal{T}_{\mathbf{x}}\partial B\times T_t[0,T)$.
We can describe the tangent space $T_{(\mathbf{x},t)}(\partial B)_T\cong \mathcal{T}_{\mathbf{x}}\partial B\times T_t (0,T)$ embedded in $\mathbb{R}^3\times\mathbb{R}$. 
We use $\biota_T$ to denote the natural inclusion map of the tangent bundle into its ambient Euclidean space:
    \begin{align}
        \biota_T: \mathcal{T}(\partial B)_T \to (\mathbb{R}^3\times\mathbb{R})\times(\mathbb{R}^3\times\mathbb{R}). 
    \end{align}
Finally, we need the {\it normal bundle} ${\mathcal N}(\partial B)_T$ (\cite{tomdieck2010algebraic}, section 15.6; 
\cite{lee2013smooth}, Proposition 13.21), and we can take the normal space $N_{(\mathbf{x},t)}(\partial B)_T\cong \mathcal{N}_{\mathbf{x}}\partial B\times \{0\}$ embedded in $\mathbb{R}^{3}\times\mathbb{R}$.   
We use $\biota_N$ to denote the natural inclusion map of the normal bundle into its ambient Euclidean space:
    \begin{align}
        \biota_N: \mathcal{N}(\partial B)_T \to (\mathbb{R}^3\times\mathbb{R})\times(\mathbb{R}^3\times\mathbb{R}). 
    \end{align}
Because $(\partial B)_T$ is orientable, the normal bundle ${\mathcal N}(\partial B)_T$ is trivial 
(\cite{lee2013smooth},Exercise 15.8) and every smooth section $\sigma: (\partial B)_T\to {\mathcal N}(\partial B)_T$ 
can be identified with the map $(\bx,t)\mapsto (\bx,t,\sigma(\bx,t)\bn(\bx),0)$ for a smooth function 
$\sigma:(\partial B)_T\to {\mathbb R}.$

\subsection{Distributions on Manifolds}

The results on distributions that we require in this paper follow as a special case of general theory for 
a $C^{\infty}$ manifold $X$ of dimension $n$, 
and a rank $k$ vector bundle $(E, \Pi, X)$ of $X$. 
Let $\cup_{i\in I}(V_{i}, \Phi_{i}),$ $V_i\subset X,$ $\Phi_i: \Pi^{-1}(V_i)\to {\mathbb R}^n\times {\mathbb R}^k$ 
be a smooth structure of $E$, and $\cup_{i\in I}(V_{i}, \phi_{i}),$ $\phi_i: V_i\to {\mathbb R}^n$ 
be a corresponding smooth structure on $X$ with $\Pi_1\Phi_i=\phi_i\Pi.$ Here $\Pi_1$ projects 
onto the first factor of ${\mathbb R}^n\times {\mathbb R}^k$ and $\Pi_2$ onto the second. 
We shall denote by $D(X,E)$ the space of \textit{smooth sections} with compact support, 
which is a Fréchet space with the seminorms  defined by 
\begin{equation}
    p_{s,m,i}(\psi) := \sum_{j=1}^{k}\tilde{p}_{s,m,i}((\Pi_2\Phi_{i})^j\circ \psi|_{V_{i}}\circ\phi_{i}^{-1})\label{seminorms}
\end{equation}
where $\psi\in D(X,E)$ and the $\tilde{p}_{s,m,i}$'s are a countable and separating basis 
of seminorms on $C^{\infty}(\phi_{i}(V_{i}))$ defined by 
\begin{align}
    \tilde{p}_{s, m, i}(f) = \sup_{x\in K_m^{(i)}, |\alpha|\le s}|D^{\alpha}f(x)|
\end{align}
for $f\in C^{\infty}(\phi_{i}(V_{i})).$ Here, $m$ is the index of a fundamental increasing sequence 
$(K_m^{(i)})$ of compact subsets of $\phi_i(V_i).$ 
For further details, see \cite{dieudonne1972treatise}, Chapter XVII.
Then, one can define the space of \textit{distributional sections} by
\begin{equation}
    D'(X,E) := D(X, E^*\otimes\hat{\Lambda}^n(X))' \label{dist-sec-def} 
\end{equation}
Here, $E^*$ is the dual bundle of $E$ and $\hat{\Lambda}^n(X)$ denotes the bundle of densities on $X$. 
For these standard notions, see e.g. \cite{grosser2013geometric, wagner2010distributions}.
One can embed $D(X, E)$ into $D'(X, E)$ by
\begin{equation}
    D(X, E) \hookrightarrow D'(X, E): \psi \mapsto T_\psi, \ 
\langle T_\psi,f\rangle:=\int_X \text{trace}(\psi\otimes f) \label{dsectionEmbed}
\end{equation}
where $\text{trace}(\psi\otimes f)\in L_{\text{loc}}^1(X,\hat{\Lambda}^{n}(X))$ defines an integrable Radon measure on $X$, 
for any $\psi\in D(X, E) \text{ and }f\in D(X, E^*\otimes\hat{\Lambda}^{n}(X))$.
We now specialize these results for general vector bundles to the cases of interest. 

Let $D((\partial B)_T; \mathcal{T}^*(\partial B)_T)$ denote the space of smooth sections with compact support 
of the cotangent bundle ${\mathcal T}^*(\partial B)_T$. Note that 
the tangent spaces are finite-dimensional at each $(\bx,t)\in (\partial B)_T$ and 
thus $\mathcal{T}^*(\partial B)_T\simeq \mathcal{T}(\partial B)_T$ as a bundle isomorphism. 
For $\psi\in D((\partial B)_T; \mathcal{T}^*(\partial B)_T)$ and $(\mathbf{x},t)\in V_{i}\subset(\partial B)_T$
\begin{align}
    \biota_T(\psi(\mathbf{x},t)) &= (\mathbf{x},t,\bu,v),\; \text{with } \bu\in \mathcal{T}^*_{\mathbf{x}}\partial B\subset\mathbb{R}^3,\; v\in \mathcal{T}_t(0,T)=\mathbb{R}
    % \Phi_{\alpha}(\bm{\psi}(\mathbf{x},t)) &= (\phi_{\alpha}(\mathbf{x},t), v_x^1(v), \dots, v_x^{2}(v), v_t(u))\in \phi_{\alpha}(V_{\alpha})\times\mathbb{R}^{2}\times\mathbb{R}\subset(\mathbb{R}^{2}\times\mathbb{R})\times\mathbb{R}^{2}\times\mathbb{R}
\end{align}
% where, $v_x, v_t$ are linear on $\mathcal{T}_{\mathbf{x}}\partial B$ and $T_t[0,T)$. 
By Prop.16.36 in \cite{lee2013smooth},  $\hat{\Lambda}^{3}((\partial B)_T)$ is a smooth line bundle of $(\partial B)_T$
and as a consequence of 15.29 in  \cite{lee2013smooth}, this density bundle is trivialized by the Riemannian volume form. Thus, 
we may identify 
\begin{align}
    D((\partial B)_T; \mathcal{T}^*(\partial B)_T) &\longleftrightarrow  D((\partial B)_T, \mathcal{T}^*(\partial B)_T\otimes\hat{\Lambda}^3((\partial B)_T))
    \\
    \chi &\longleftrightarrow  \chi\, dS\,dt
\end{align}
where $dS$ is the volume form of $\partial B$ (surface area). In that case, by the general definition
\eqref{dist-sec-def} applied to the tangent bundle 
\be D'((\partial B)_T, \mathcal{T}(\partial B)_T)=D((\partial B)_T, \mathcal{T}^*(\partial B)_T)', \ee 
and we may embed
\begin{align}\label{embed-T} 
    D((\partial B)_T; \mathcal{T}(\partial B)_T) &\hookrightarrow D'((\partial B)_T, \mathcal{T}(\partial B)_T)\\
    \chi &\mapsto T_\chi, \quad \langle T_\chi,\psi\rangle = 
    \int_{(\partial B)_T} \langle\psi,\chi\rangle \, dS\, dt  
 \end{align}
for all $\psi\in  D((\partial B)_T; \mathcal{T}^*(\partial B)_T)$ and 
$\chi\in  D((\partial B)_T; \mathcal{T}(\partial B)_T).$

Likewise, $D((\partial B)_T; \mathcal{N}^*(\partial B)_T)$ denotes the space of smooth sections with compact support of the conormal bundle ${\mathcal N}^*(\partial B)_T\simeq {\mathcal N}(\partial B)_T$, so that for $\bpsi\in D((\partial B)_T; \mathcal{N}^*(\partial B)_T)$ and $(\mathbf{x},t)\in V_{i}\subset(\partial B)_T$,
\begin{align}
    \biota_N(\bpsi(\mathbf{x},t)) &= (\mathbf{x},t,\bu,0),\; \text{with } \bu\in \mathcal{N}^*_{\mathbf{x}}\partial B=\{\bn(\bx)\mathbb{R}\}.
    % \Phi_{\alpha}(\bm{\psi}(\mathbf{x},t)) &= (\phi_{\alpha}(\mathbf{x},t), v_x^1(v), \dots, v_x^{2}(v), v_t(u))\in \phi_{\alpha}(V_{\alpha})\times\mathbb{R}^{2}\times\mathbb{R}\subset(\mathbb{R}^{2}\times\mathbb{R})\times\mathbb{R}^{2}\times\mathbb{R}
\end{align}
Similarly as before, we may identify 
\begin{align}
    D((\partial B)_T; \mathcal{N}^*(\partial B)_T) &\longleftrightarrow  D((\partial B)_T, \mathcal{N}^*(\partial B)_T\otimes\hat{\Lambda}^3((\partial B)_T))
    \\
    \bchi &\longleftrightarrow  \bchi\, dS\,dt
\end{align}
In that case, by the general definition
\eqref{dist-sec-def} applied to the normal bundle 
\be D'((\partial B)_T, \mathcal{N}(\partial B)_T)=D((\partial B)_T, \mathcal{N}^*(\partial B)_T)', \ee 
and we may embed
\begin{align}
    D((\partial B)_T; \mathcal{N}(\partial B)_T) &\hookrightarrow D'((\partial B)_T, \mathcal{N}(\partial B)_T)\\
    \bchi &\mapsto T_{\scriptsize \bchi}, \quad \langle T_{\scriptsize \bchi},\bpsi\rangle = 
    \int_{(\partial B)_T} \langle\bpsi,\bchi\rangle \, dS\, dt  \label{embed-N} 
\end{align}
for all $\bpsi\in  D((\partial B)_T; \mathcal{N}^*(\partial B)_T)$ and 
$\bchi\in  D((\partial B)_T; \mathcal{N}(\partial B)_T).$

\subsection{Extensions}
The notion of an \textit{extension operator} allows us to identify functions in the 
interior domain $\Omega\times (0,T)$ with sectional distributions of ${\mathcal T}(\partial B)_T$
and of ${\mathcal N}(\partial B)_T.$ Beginning with the tangent bundle, we define 
$\mathcal{E}_{\mathcal{T}}$ as the set of all linear operators
\begin{align}
    \textbf{Ext}: \bpsi\in D((\partial B)_T,\mathcal{T}^*(\partial B)_T) 
    \mapsto \barphi \in \Bar{D}(\Bar{\Omega}\times (0,T),\mathbb{R}^3\times\mathbb{R}) 
\end{align} 
satisfying pointwise equality (\ref{T-extend}) and continuous in the sense that 
for all multi-indices $\alpha = (\alpha_1, \alpha_2, \alpha_3,\alpha_4)$ with $|\alpha| \le N$, 
$\forall (\bx,t)\in\Bar{\Omega}\times(0,T)$ and $\forall m>0$
\begin{align}
    |D^{\alpha}\barphi(\mathbf{x},t)|
    =|D^{\alpha}\textbf{Ext}(\bpsi)(\mathbf{x},t)|\lesssim \sup_{i\in I}p_{N,m,i}(\bpsi)\label{extensionBound}
\end{align}
where, $\lesssim$ denotes inequality with constant prefactor depending on the domain $(\partial B)_T$ 
and the extension operator $\textbf{Ext}$. Note that for $(\bx,t)\in\partial\Omega\times(0,T)$, the 
derivatives $D^{\alpha}$ with non-vanishing spatial indices $\alpha_i,i=1,2,3$ are one-sided derivatives,
which according to definition \eqref{Dbar} may be calculated as $D^\alpha\varphi = 
D^\alpha\phi|_{\Bar{\Omega}\times(0,T)}$ for $\phi\in C_c^{\infty}(\mathbb{R}^3\times(0,T),{\mathbb R}^3),$ 
independent of the choice of $\phi.$ Furthermore, if $\bpsi\in D((\partial B)_T,\mathcal{T}^*(\partial B)_T)$
is a {\it space-like section} of $\mathcal T^*(\partial B)_T$, so that \\
\be 
\biota_T(\bpsi(\mathbf{x},t)) = (\mathbf{x},t,\bu,0),\; \text{with } \bu\in \mathcal{T}^*_{\mathbf{x}}\partial B\subset\mathbb{R}^3
\ee
for all $(\bx,t)\in (\partial B)_T,$ then we may require that 
\begin{align}
    \textbf{Ext}: \bpsi\in D((\partial B)_T,\mathcal{T}^*(\partial B)_T) 
    \mapsto \barphi \in \Bar{D}(\Bar{\Omega}\times (0,T),\mathbb{R}^3\times\{0\}) 
    \lb{sp-cond} 
\end{align} 

We show that the set ${\mathcal E}_T$ is non-empty, by constructing such an extension operator explicitly.
We define $\ext^0_T\in {\mathcal E}_T$ as a map 
\begin{align}
    \ext^0_T: \bpsi\in D((\partial B)_T,\mathcal{T}^*(\partial B)_T) 
    \mapsto \barphi \in \Bar{D}(\Bar{\Omega}\times (0,T),\mathbb{R}^3\times{\mathbb R}) 
\end{align}
by the explicit formula
\begin{align}
    \barphi(\mathbf{x},t) = 
        \begin{cases} \exp\left(-\frac{d(\mathbf{x})}{\epsilon-d(\mathbf{x})}\right)
            (\textbf{Proj}_{st}\circ\biota_T\circ\bpsi)(\pi(\mathbf{x}),t), 
            & d(\mathbf{x})<\epsilon\\
            0 & d(\mathbf{x})\ge \epsilon    
        \end{cases}\label{ext0}
\end{align}
for any $\epsilon<\eta(\Omega).$ Then $\ext^0_T$ is clearly linear by the linearity of  $\biota_T$
and satisfies $\barphi|_{\partial B} =(\textbf{Proj}_{st}\circ\biota_T)(\bpsi)$. 
$\barphi$ is smooth by the smoothness of distance function 
$d$ and projection $\pi$ in $\Bar{\Omega}_\epsilon.$ One can easily obtain the bound (\ref{extensionBound}) 
for $\ext^0_T$ by product rule and chain rule in calculus. Thus, $\ext^0_T$ is continuous.
In particular, for a space-like section $\bpsi\in D((\partial B)_T,\mathcal{T}^*(\partial B)_T),$
the condition \eqref{sp-cond} holds, so that we may take $\barphi\in \Bar{D}(\Bar{\Omega}\times(0,T),\mathbb{R}^3)$
with 
\begin{align}
     \barphi|_{\partial B} = (\textbf{Proj}_{s}\circ\biota_T)(\bpsi), \;\;\;\;
    \barphi(\mathbf{x},t)\perp\mathbf{n}(\pi(\mathbf{x})) \text{ in } \Omega_{\eta(\Omega)}.
\end{align}
As a consequence of the second property, $\ext^0_T$ satisfies also the natural condition \eqref{extCond1} and $\ext^0_T\in \widetilde{{\mathcal E}}_T\neq \emptyset.$ 

Similarly, we can define a set $\mathcal{E}_{\mathcal{N}}$, consisting of maps 
\be 
    \textbf{Ext}: \bpsi\in D((\partial B)_T,\mathcal{N}^*(\partial B)_T) \mapsto 
    \barphi \in \Bar{D}(\Bar{\Omega}\times (0,T),\mathbb{R}^3), 
\ee
which are linear, continuous and satisfy 
\begin{align}
    \barphi|_{\partial B} = (\textbf{Proj}_{s}\circ\biota_N)(\bpsi).
\end{align}
The set $\mathcal{E}_{\mathcal{N}}$ is non-empty, because $\ext_N^0,$ defined for $\epsilon<\eta(\Omega)$ by 
\begin{align}
    \barphi(\mathbf{x},t) = 
        \begin{cases} \exp\left(-\frac{d(\mathbf{x})}{\epsilon-d(\mathbf{x})}\right)
            (\mathbf{Proj}_s\circ\biota_N\circ\bpsi)(\pi(\mathbf{x}),t), 
            & d(\mathbf{x})<\epsilon\\
            0 & d(\mathbf{x})\ge \epsilon  
        \end{cases}\label{extN0}
\end{align}
for any $\bpsi\in D((\partial B)_T,\mathcal{N}^*(\partial B)_T),$ provides an explicit 
example which satisfies also the condition
\be 
    \barphi(\mathbf{x},t)\parallel\mathbf{n}(\pi(\mathbf{x})) \text{ in } \Omega_{\eta(\Omega)} .
\ee 

One can use extension operators to identify ${\bf F}\in D'((0,T),C^\infty(\Bar{\Omega},{\mathbb R}^3))$
with sectional distributions of the tangent and normal bundles. For example, for some $\ext\in \cE_{\cT}$
we define
\begin{equation}
    \begin{aligned}
        \ext^*: D'((0,T),C^\infty(\Bar{\Omega},{\mathbb R}^3)) &\to D'((\partial B)_T, \cT(\partial B)_T)\\
        {\mathbf F} &\mapsto \ext^*({\mathbf F})
    \end{aligned}
\end{equation}
as follows:
\begin{align}
    \langle\ext^*({\mathbf F}),\bpsi\rangle := \langle {\mathbf F},\ext(\bpsi)\rangle
\end{align}
for all $\bpsi\in D((\partial B)_T, \cT^*(\partial B)_T)$.
Linearity and continuity properties of $\ext^*({\mathbf F})$ follow from those of $\ext\in\cE_{\cT^*}$,
so that $\ext^*({\mathbf F})\in D'((\partial B)_T, \cT(\partial B)_T).$ Note that this identification 
depends on the choice of the extension operator $\ext$. Likewise, we can define 
\be 
    \ext^*: D'((0,T),C^\infty(\Bar{\Omega},{\mathbb R}^3))\to D'((\partial B)_T, \cN(\partial B)_T)
\ee
for each $\ext\in\cE_{\cN}$, in exactly the same manner. 

\section{Proof of Theorem 1} 
\noindent
The proof will proceed in steps. First, note that $\btau_w^{\nu} = 
\nu\bomega^{\nu}|_{(\partial B)_T}\btimes\bn$ can be embedded into $D((\partial B)_T, \mathcal{T}(\partial B)_T)$ by
\begin{align}
    \btau_w^{\nu} \mapsto \big((\mathbf{x},t)
    \mapsto(\mathbf{x},t,\btau_w^{\nu}(\mathbf{x},t),0)\big)\label{embeddedStress}
\end{align}
which can be further embedded into $D'((\partial B)_T, \mathcal{T}(\partial B)_T)$ by (\ref{embed-T}). 
For the rest of this article, we abuse the notation $\btau_w^{\nu}$ to mean both vector fields on $(\partial B)_T$ 
and smooth sections (\ref{embeddedStress}) in $D((\partial B)_T, \mathcal{T}(\partial B)_T)$, according to the context.
We then show that $\langle\btau_w^{\nu},\bpsi\rangle$ for any $\bpsi\in D((\partial B)_T, \mathcal{T}^*(\partial B)_T)$ 
converges to a quantity denoted $\langle\btau_w,\bpsi\rangle.$  Finally, we prove that $\btau_w$ is a continuous 
linear functional on $D((\partial B)_T, \mathcal{T}^*(\partial B)_T),$ thus obtaining the convergence (\ref{tau-lim}) 
in the sense of distributional sections of the tangent bundle. % TODO: reference (13) in HQ0GE1

Similarly, wall pressure stress $p_w^\nu\bn$ is embedded into $D((\partial B)_T, \mathcal{N}(\partial B)_T)$ by
\begin{align}
    p_w^{\nu}\bn \mapsto \big((\mathbf{x},t)
    \mapsto(\mathbf{x},t,p_w^{\nu}(\mathbf{x},t)\bn(\bx),0)\big)\label{embeddedPressure}
\end{align}
which can be further embedded into $D'((\partial B)_T, \mathcal{N}(\partial B)_T)$ by (\ref{embed-N}).
An analogous argument shows that $\langle p_w^\nu\bn,\bpsi\rangle \to\langle p_w\bn,\bpsi\rangle$ for 
all $\bpsi\in D((\partial B)_T, \mathcal{N}^*(\partial B)_T),$ with a suitable element 
$p_w\bn\in D'((\partial B)_T, \mathcal{N}(\partial B)_T).$

\subsection{Convergence of skin friction $\btau_w^{\nu}$ to $\btau_w$}\lb{sec3.1} 
Consider an arbitrary extension operator $\ext\in\cE_{\cT}$, 
and a smooth section $\bpsi\in D((\partial B)_T, \mathcal{T}^*(\partial B)_T)$.
Let $\barphi = \ext(\bpsi)$ so that $\barphi\in\Bar{D}(\Bar{\Omega}\times(0,T),\mathbb{R}^3)$ 
and $\barphi\bdot\bn = 0$ on $(\partial B)_T$.
Integrating the Navier-Stokes equations (\ref{NS1}) against $\barphi$ yields
\begin{equation}
    \begin{aligned}\label{basicT} 
        & -\int_0^T\int_{\Omega}\partial_t\barphi\bdot \mathbf{u}^{\nu}+\grad\barphi\bdots[\mathbf{u}^{\nu}\otimes\mathbf{u}^{\nu} 
        + p^{\nu}\mathbf{I}]\,dV dt\\
        &-\int_0^T \int_{\Omega} \nu \triangle \barphi \bdot\bu^{\nu}\,dV dt 
        =-\int_0^T\int_{\partial\Omega} \nu\frac{\partial \bu^\nu}{\partial n} \bdot\barphi|_{\partial\Omega} \,dS\,dt. 
    \end{aligned}
\end{equation}
As a useful shorthand, we write this as 
\begin{align}
    -\dbangle{\mathbf{u}^{\nu}}{\partial_t\barphi} - \dangle{\mathbf{u}^{\nu}\otimes\mathbf{u}^{\nu}\bdots\grad\barphi}
    -\dbangle{p^{\nu}}{\grad\bdot\barphi} - \dbangle{\nu\bu^{\nu}}{\triangle\barphi} = 
    - \langle \btau_w^{\nu},\bpsi\rangle \label{local_momentum}
\end{align}
where $\dbangle{}{}$ denotes the integration over space-time domain $\Omega\times(0,T)$ and 
\begin{align}
    \langle \btau_w^{\nu},\bpsi\rangle = \int_0^T\int_{\partial\Omega}
    \langle\bpsi,\btau_w^{\nu}\rangle \, dS\,dt  
    = \int_0^T\int_{\partial\Omega} \nu\frac{\partial \bu}{\partial n}\bdot\barphi|_{\partial\Omega}\, dS\,dt. 
\end{align}

By Cauchy-Schwartz, 
\begin{align}
    \left|\dbangle{\nu\bu^{\nu}}{\triangle\barphi}\right|
    &\le \nu\sqrt{\int_0^T\int_{\Omega} |\triangle\barphi|^2\, dV dt}
    \sqrt{\int_0^T\int_{\text{supp}(\barphi)}
    |\bu^\nu|^2 \,dV dt}\\
    &\to0,\text{ as }\nu\to 0,
\end{align}
as a consequence of the assumptions \eqref{L2Conv},\eqref{uBBound} on velocity $\bu^\nu.$

The convergence of the rest of the lefthand side of (\ref{local_momentum}) as $\nu\to0$ follows from 
the following elementary lemma: 
\begin{lemma}\label{lemma1}
    If $f^{\nu}$ converges weakly to $f$ in $L^p((0,T),L_{\text{loc}}^p(\Omega))$, $1\leq p<\infty,$
    and if $f^{\nu}$ is uniformly bounded in $L^p((0,T),L^p(\Omega_{\epsilon}))$ for
    sufficiently small $\epsilon>0,$  
    then $f\in L^p((0,T),L^p(\Omega_{\epsilon}))$, 
    and for $\varphi\in \Bar{D}(\Bar{\Omega}\times(0,T))$, we have the following limit
    \begin{align}
        \lim_{\nu\to0}\int_0^T\int_{\Omega}\varphi f^{\nu}\, dV dt 
        = \int_0^T\int_{\Omega}\varphi f \, dV dt\label{lemma1limit}
    \end{align}
\end{lemma}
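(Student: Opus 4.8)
The key observation is that, although $\varphi\in\Bar{D}(\Bar{\Omega}\times(0,T))$ need not vanish on $\partial\Omega$, it is smooth and has compact support in $\Bar{\Omega}\times(0,T)$; its contribution therefore splits into an interior piece, where the weak $L^p_{\mathrm{loc}}$ convergence applies directly, and a thin near-wall piece controlled by the uniform bound \eqref{uBBound}-type hypothesis. So the plan is: (i) verify the membership $f\in L^p((0,T),L^p(\Omega_\epsilon))$ by weak lower semicontinuity; (ii) cut off $\varphi$ away from $\partial B$ to reduce the interior part to a standard test function; (iii) estimate the boundary-layer remainder by H\"older's inequality using the uniform near-wall bound and the fact that $|\Omega_\delta|\to0$.

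First I would establish the membership. Pick an exhaustion $U_k\subset\subset\Omega$ with $U_k\uparrow\Omega$; since $\overline{U_k}\subset\Omega$ is compact, $U_k\cap\Omega_\epsilon\subset\subset\Omega$, and on $(0,T)\times(U_k\cap\Omega_\epsilon)$ the sequence $f^\nu$ converges weakly in $L^p$ (a restriction of the assumed $L^p_{\mathrm{loc}}$ weak convergence). By weak lower semicontinuity of the norm, $\|f\|_{L^p((0,T)\times(U_k\cap\Omega_\epsilon))}\le\liminf_\nu\|f^\nu\|_{L^p((0,T)\times\Omega_\epsilon)}=:M<\infty$; since $U_k\cap\Omega_\epsilon\uparrow\Omega_\epsilon$, monotone convergence yields $\|f\|_{L^p((0,T)\times\Omega_\epsilon)}\le M$.

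For the limit \eqref{lemma1limit}, fix $0<\delta<\epsilon$ and choose a smooth cutoff $\zeta_\delta\in C^\infty(\Bar{\Omega})$ with $0\le\zeta_\delta\le 1$, $\zeta_\delta\equiv 0$ where $d(\bx)\le\delta/2$ and $\zeta_\delta\equiv 1$ where $d(\bx)\ge\delta$; such a function exists because $d\in C^\infty(\Bar{\Omega}_\epsilon)$ (Section \ref{subsec:manifold}). Then $\zeta_\delta\varphi\in C_c^\infty(\Omega\times(0,T))$, so the assumed weak convergence gives $\int_0^T\int_\Omega\zeta_\delta\varphi\,f^\nu\,dVdt\to\int_0^T\int_\Omega\zeta_\delta\varphi\,f\,dVdt$. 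It remains to control the remainder $\int_0^T\int_\Omega(1-\zeta_\delta)\varphi\,g\,dVdt$ for $g=f^\nu$ and $g=f$, which is supported in $(0,T)\times\overline{\Omega_\delta}$. For $p>1$, H\"older's inequality bounds it by $\|\varphi\|_{L^{p'}((0,T)\times\Omega_\delta)}\cdot M$; since $\varphi$ is bounded, $p'<\infty$, and $|\Omega_\delta|\to0$ as $\delta\to0$ by smoothness and compactness of $\partial B$, this tends to $0$ uniformly in $\nu$, and letting $\delta\to0$ concludes the argument.

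The case $p=1$ is where I expect the real difficulty: then $p'=\infty$, so $\|\varphi\|_{L^{p'}((0,T)\times\Omega_\delta)}=\|\varphi\|_{L^\infty}$ does not shrink with $\delta$, and one must additionally rule out concentration of the $L^1$-mass of $f^\nu$ at the wall uniformly in $\nu$, i.e.\ show $\sup_\nu\|f^\nu\|_{L^1((0,T)\times\Omega_\delta)}\to0$ as $\delta\to0$ (equi-integrability of $\{f^\nu\}$ near $\partial B$). This does not follow from the bare $L^1(\Omega_\epsilon)$ bound alone, so I would close it either by (i) strengthening the hypothesis to a uniform $L^q(\Omega_\epsilon)$ bound for some $q>1$, which forces equi-integrability and reduces the remainder estimate to the $p>1$ case above, or (ii) exploiting the specific structure of the quantities to which the lemma is applied --- in particular $f^\nu=\bu^\nu\otimes\bu^\nu$ is a product of fields uniformly bounded in $L^2(\Omega_\epsilon)$ that are additionally controlled in the near-wall boundary layer, so that their $L^1$-mass on $\Omega_\delta$ is uniformly small.
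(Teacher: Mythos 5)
Your approach is in substance the same as the paper's. For the membership $f\in L^p((0,T),L^p(\Omega_\epsilon))$, you use an exhaustion $U_k\uparrow\Omega$ with weak lower semicontinuity plus monotone convergence, while the paper uses a dyadic annular decomposition $\Gamma_n=\Omega_{\epsilon_n}\backslash\Omega_{\epsilon_{n+1}}$ together with Fatou's lemma for series; these are interchangeable. For the limit \eqref{lemma1limit}, both proofs split $\Omega$ into the interior $\Omega^\delta=\Omega\backslash\Omega_\delta$ (where local weak convergence is applied) and the boundary layer $\Omega_\delta$ (estimated by H\"older using the uniform $L^p(\Omega_\epsilon)$ bound together with $|\Omega_\delta|\to 0$). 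Your smooth cutoff $\zeta_\delta$ is a minor refinement; the paper instead pairs $f^\nu$ directly against $\varphi\mathbf{1}_{\Omega^\delta}$, which is admissible since weak convergence in $L^p_{\text{loc}}$ is tested against compactly supported $L^{p'}$ functions, not only smooth ones.

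Your flag at $p=1$ is correct, and it is worth noting that the difficulty you identify is present in the paper's own proof: the final step there invokes $\lim_{\delta\to 0}\|\varphi\|_{L^{p'}((0,T)\times\Omega_\delta)}=0$, which for $p'=\infty$ amounts to $\sup_{(0,T)\times\Omega_\delta}|\varphi|\to 0$ and fails precisely because $\varphi\in\Bar{D}(\Bar{\Omega}\times(0,T))$ is allowed to be nonzero on $\partial\Omega$. This is not an idle corner case, since the lemma is applied with $p=1$ to $\bu^\nu\otimes\bu^\nu$ and to $p^\nu$, which under \eqref{uBBound} and \eqref{pBBound} are controlled near the wall only in $L^1$. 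The two repairs you sketch are both reasonable; the minimal one is to add an equi-integrability condition near $\partial B$, i.e.\ $\lim_{\delta\to 0}\sup_\nu\|f^\nu\|_{L^1((0,T)\times\Omega_\delta)}=0$ (automatic for $p>1$ from the uniform $L^p(\Omega_\epsilon)$ bound and $|\Omega_\delta|\to 0$), which is exactly what the H\"older step requires. As written, though, you only diagnose the gap and indicate two possible exits without carrying either out, so the proposal does not fully prove the lemma for $p=1$; it does, however, correctly detect an issue that the paper's proof glosses over.
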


\begin{proof}
    Let $M_\epsilon=\sup_{\nu>0}\norm{f^{\nu}}_{L^p((0,T),L^p(\Omega_{\epsilon}))} < \infty$. 
    Then let $\epsilon_n = 2^{-n}\epsilon$ for $n\ge0$ and $\Gamma_n = \Omega_{\epsilon_n}\backslash\Omega_{\epsilon_{n+1}}$. 
    Then $\Omega_{\epsilon} =\cup_{n=0}^{\infty}\Gamma_n$ and the union is a disjoint union.
    With weak lower-semicontinuity of the $L^p$-norm and Fatou's lemma, we have 
    \begin{align}
        &\int_{(0,T)\times \Omega_{\epsilon}}|f|^p\, dV\,dt
        = \sum_{n=0}^{\infty}\int_{(0,T)\times \Gamma_n}|f|^p\,dV\,dt
        \le\sum_{n=0}^{\infty}\liminf_{\nu\to0}\int_{(0,T)\times \Gamma_n}|f^{\nu}|^p\,dV\,dt\\
        &\le\liminf_{\nu\to0}\sum_{n=0}^{\infty}\int_{(0,T)\times \Gamma_n}|f^{\nu}|^p\, dV\,dt
        =\liminf_{\nu\to0}\int_{(0,T)\times \Omega_{\epsilon}}|f^{\nu}|^p\, dV\,dt\le M^p_\epsilon < \infty
    \end{align}
Thus, we obtain that $f\in L^p((0,T),L^p(\Omega_{\epsilon}))$. 

Furthermore, for any $0<\delta<\epsilon$ we obtain by H\"older inequality and the 
uniform $L^p$ bound on $f^{\nu}$ that for $\frac{1}{p}+\frac{1}{p'}=1$
    \begin{align}
        \sup_{\nu>0}\left|\int_{(0,T)\times \Omega_{\delta}}\varphi f^{\nu} dV\,dt \right|
        \le \norm{\varphi}_{L^{p'}((0,T)\times \Omega_{\delta})} M^p_\epsilon 
    \end{align}
with an identical bound for the limit function $f.$ As a consequence
\begin{eqnarray} 
&& \left|\int_{(0,T)\times \Omega}\varphi f^{\nu} dV\,dt -
\int_{(0,T)\times \Omega}\varphi f dV\,dt \right|
\leq 2  \norm{\varphi}_{L^{p'}((0,T)\times \Omega_{\delta})} M^p_\epsilon \cr 
&& \hspace{50pt} + 
\left|\int_{(0,T)\times \Omega^\delta}\varphi f^{\nu} dV\,dt -
\int_{(0,T)\times \Omega^\delta}\varphi f dV\,dt \right| 
\end{eqnarray}  
where $\Omega^\delta:=\Omega\backslash\Omega_\delta.$
Using convergence of $f^{\nu}$ to $f$ weakly in $L^p((0,T),L_{\text{loc}}^p(\Omega))$
and $\lim_{\delta\to 0} \norm{\varphi}_{L^{p'}((0,T)\times\Omega_{\delta})}=0,$ we conclude. 
\hfill $\qquad \qquad $
\end{proof}

Conditions \eqref{L2Conv},\eqref{pBulkBound} imply that, at least along a subsequence, 
$\bu^{\nu},$ $\bu^\nu\otimes\bu^\nu,$ $p^{\nu}$ have local weak convergence to $\bu,$ $\bu\otimes\bu,$ $p$ respectively. 
Then by Lemma \ref{lemma1}, 
\begin{align}
    \mathbf{u} \in L^2((0,T),L^{2}(\Omega_{\epsilon})),\;\;\;\; 
    p\in L^1((0,T), L^{1}(\Omega_{\epsilon}))\label{nearBoundary}
\end{align}
and as $\nu\to0$, the left hand side of (\ref{local_momentum}) converges to
\begin{align}
    -\dbangle{\mathbf{u}}{\partial_t\barphi} - \dangle{\mathbf{u}\otimes\mathbf{u}\bdots\grad\barphi}
    -\dbangle{p}{\grad\bdot\barphi}
:=
\langle\btau_w,\bpsi\rangle. \label{taulim}
\end{align}
As $\bpsi$ was arbitrary, , we conclude that
\begin{align}
    \lim_{\nu\to0}\langle\btau_w^{\nu},\bpsi\rangle = 
    \langle\btau_w,\bpsi\rangle, \;\forall\bpsi\in D((\partial B)_T,\mathcal{T}^*(\partial B)_T)
\end{align}

\subsection{$\btau_w$ is a distributional section}\lb{sec3.2} 
Linearity of $\btau_w$ follows easily from the linearity of $\ext$ and the definition (\ref{taulim}).
Then, it suffices to prove the continuity of $\btau_w$. 
Let $K$ be a compact subset of $(\partial B)_T$. Then there exists a finite set $J$ such that 
$K\subset\cup_{i\in J}V_i$, where $\cup(V_i, \phi_i)$ is a smooth structure of $(\partial B)_T$.
Furthermore, there exists some $m_0>0$ such that for each $i\in J,$ 
$\phi_i(K\cap V_i)\subset K_{m_0}^{(i)}$ for a compact set $K_{m_0}^{(i)}$ in the fundamental 
sequence of $\phi_i(V_i).$ Therefore, for all $\bpsi\in D((\partial B)_T, \mathcal{T}^*(\partial B)_T)$ 
supported on $K$ and for all $m\geq m_0$
\begin{align}
    \dbangle{\mathbf{u}}{\partial_t\barphi} \lesssim \norm{\bu}_{L^2(\text{supp}(\barphi))}\sup_{i\in I}p_{1,m,i}(\bpsi)\\
    \dangle{\mathbf{u}\otimes\mathbf{u}\bdots\grad\barphi}\lesssim \norm{\bu}^2_{L^2(\text{supp}(\barphi))}\sup_{i\in I}p_{1,m,i}(\bpsi)
\end{align}
where $\barphi = \ext(\bpsi),$ so that $\text{supp}(\barphi)$ is a compact subset of $\Bar{\Omega}\times(0,T)$ 
by definition \eqref{Dbar}. 
Here, $\lesssim$ denotes inequality up to a constant prefactor, depending on $K,$ $\ext$.
Note that $\bu$ is bounded in $L^2(\text{supp}(\barphi))$, 
as a result of interior boundedness (\ref{L2Conv}) and near-boundary boundedness 
(\ref{nearBoundary}) in $L^2$. %TODO: eq(8) in HQ0GE1
Similarly, for all $\bpsi\in D((\partial B)_T, \mathcal{T}^*(\partial B)_T)$ 
supported on $K$ and for all $m\geq m_0,$ $p$ is bounded in $L^1(\text{supp}(\barphi))$ and 
\begin{align}
    \dbangle{p}{\grad\bdot\barphi}\lesssim \norm{p}_{L^1(\text{supp}(\barphi))}\sup_{i\in I}p_{1,m,i}(\bpsi). 
\end{align}
In conclusion, $\btau_w$ is continuous and $\btau_w$ is thus a well defined distribution 
in $D'((\partial B)_T),\mathcal{T}(\partial B)_T)$ for each $\ext\in {\mathcal E}_{{\mathcal T}}.$

Note that $\btau_w$ is independent of $\textbf{Ext}\in\mathcal{E}_{\mathcal{T}}.$
Indeed, by combining the result in this section with that in \ref{sec3.1}, we see 
for each $\ext\in {\mathcal E}_{{\mathcal T}}$ that $\lim_{\nu\to 0}\btau_w^{\nu}=\btau_w$
in the standard topology of $D'((\partial B)_T),\mathcal{T}(\partial B)_T).$ Since 
such limits are unique, $\btau_{w}$ is independent of the choice of extension operator
and depends only upon the subsequence $\nu_k\to 0$ chosen to obtain the limiting weak Euler 
solution $(\bu,p).$

\subsection{Pressure stress $p_w\bn$}
Consider now instead an arbitrary extension operator $\ext\in\cE_{\cN}$
and a smooth section $\bpsi\in D((\partial B)_T, \mathcal{N}^*(\partial B)_T)$.
Let $\barphi = \ext(\bpsi)$ so that $\barphi\in\Bar{D}(\Bar{\Omega}\times(0,T),\mathbb{R}^3)$ 
and $\barphi\parallel\bn$ on $(\partial B)_T$.
Integrating the Navier-Stokes equations (\ref{NS1}) against $\barphi$ yields
\begin{equation}
    \begin{aligned}\label{basicN} 
        & -\int_0^T\int_{\Omega} \left[\,\partial_t\barphi\bdot \mathbf{u}^{\nu}+\grad\barphi\bdots(\mathbf{u}^{\nu}\otimes\mathbf{u}^{\nu} 
        + p^{\nu}\mathbf{I})\,\right]\,dV\,dt\\
        &+\nu\int_0^T \int_{\Omega}(\triangle\barphi)\bdot\bu^{\nu}\,dV\,dt 
        =\int_0^T\int_{\partial\Omega}p^{\nu}|_{(\partial B)_T}\mathbf{n}\bdot\barphi|_{(\partial B)_T} dS\,dt
    \end{aligned}
\end{equation}
On the other hand,
\begin{align}
    \langle p_w^{\nu}\bn,\bpsi\rangle 
    = \int_0^T\int_{\partial B}\langle \bpsi,p^{\nu}_w\bn\rangle \, dS\,dt 
    = \int_0^T\int_{\partial B}p^{\nu}|_{(\partial B)_T}\mathbf{n}\bdot\barphi|_{(\partial B)_T} dS\,dt.
\end{align}
In shorthand, 
\begin{align}
    -\dbangle{\mathbf{u}^{\nu}}{\partial_t\barphi} - \dangle{\mathbf{u}^{\nu}\otimes\mathbf{u}^{\nu}\bdots\grad\barphi}
    -\dbangle{p^{\nu}}{\grad\bdot\barphi} + \dbangle{\nu\bu^{\nu}}{\triangle\barphi} = \langle p_w^{\nu}\bn,\bpsi\rangle \label{local_momentum_n}
\end{align}
By an analogous argument as that used to prove convergence of $\btau_w^{\nu}$ to $\btau_w$, 
it follows that (\ref{local_momentum_n}) in the limit $\nu\to 0$ yields 
for all $\bpsi\in D((\partial B)_T, \mathcal{N}^*(\partial B)_T)$ 
\begin{align}
    \langle p_w^\nu\bn,\bpsi\rangle\xrightarrow{\nu\to 0} 
    \langle p_w\bn,\bpsi\rangle := -\dbangle{\mathbf{u}}{\partial_t\barphi} - \dangle{\mathbf{u}\otimes\mathbf{u}\bdots\grad\barphi}
    -\dbangle{p}{\grad\bdot\barphi}\label{pressure_lim}
\end{align}
and $p_w\bn\in D'((\partial B)_T),\mathcal{N}(\partial B)_T)$, independent of the extension
$\ext\in{\mathcal E}_{\cN}.$

\section{Proof of Theorem 2} 
We give here a detailed proof only of the result \eqref{tangentialLimit} on the convergence 
in the space of distributional sections of the tangent bundle. The statement 
\eqref{normalLimit} on convergence in the space of distributional sections of the normal bundle 
is proved by a very similar argument, which is left to the reader. 

\subsection{Proof of a lemma} We first prove: 
\begin{lemma}\label{lemma2}
    Let $K$ be a compact subset of $\bar{\Omega}$. 
    Then for any $f\in L^p((0,T),L^p_{\text{loc}}(\Omega))$ $\cap L^p((0,T),L^p(\Omega_{\epsilon}))$, 
    with $1\le p<\infty$ and $\epsilon>0$, we have for $0<\ell<h$,
    \begin{align}
        \eta_{h,\ell}\Bar{f}_{\ell}\xrightarrow[L^p((0,T),L^p(K))]{h,\ell\to0}f\label{lemma2conv}
    \end{align}
\end{lemma}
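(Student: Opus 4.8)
The plan is to compare $\eta_{h,\ell}\Bar f_\ell$ with $f$ through the decomposition
\[
\eta_{h,\ell}\Bar f_\ell - f \;=\; \eta_{h,\ell}\bigl(\Bar f_\ell - f\bigr)\;+\;\bigl(\eta_{h,\ell}-1\bigr)f ,
\]
which is meaningful as a function on all of $\Omega$: since $0<\ell<h$, the cutoff $\eta_{h,\ell}=\theta_{h,\ell}(d(\cdot))$ vanishes on $\{d<h\}$, so its support lies in $\{d\ge h\}\subset\Omega^\ell$, exactly where the mollification $\Bar f_\ell$ is defined (and the first product is taken to be $0$ off that support). Taking $L^p((0,T),L^p(K))$ norms, the lemma reduces to showing that each of the two resulting terms tends to $0$ as $(h,\ell)\to 0$ with $\ell<h$.

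\emph{The window term.} Since $0\le\eta_{h,\ell}\le 1$ and $\eta_{h,\ell}\equiv 1$ on $\{d\ge h+\ell\}$, once $h+\ell<\epsilon$ we get $\norm{(\eta_{h,\ell}-1)f}_{L^p((0,T),L^p(K))}\le \norm{f}_{L^p((0,T),L^p(K\cap\{d<h+\ell\}))}$. As $h+\ell\to 0$ the integrand $\mathbf{1}_{(0,T)\times\{d(\bx)<h+\ell\}}\,|f|^p$ tends to $0$ at a.e.\ point (every $\bx\in\Omega$ has $d(\bx)>0$, and $K\cap\partial B$ is Lebesgue-null) and is dominated by $\mathbf{1}_{(0,T)\times\Omega_\epsilon}\,|f|^p\in L^1$ thanks to the hypothesis $f\in L^p((0,T),L^p(\Omega_\epsilon))$, so dominated convergence handles this term.

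\emph{The mollification term.} The only real subtlety is that the set $K\cap\{d\ge h\}$ on which one needs $\Bar f_\ell\approx f$ grows as $h\downarrow 0$, so one cannot simply invoke $L^p$-convergence of mollifiers on a fixed compact subset of $\Omega$. I would remove this by a global truncation. Choose $R>0$ with $K\subset\{|\bx|\le R-1\}$ and set $U:=\Omega\cap\{|\bx|<R\}$; then $f\in L^p((0,T),L^p(U))$, because $U=(U\cap\Omega_\epsilon)\cup(U\setminus\Omega_\epsilon)$ with $U\cap\Omega_\epsilon\subset\Omega_\epsilon$ and $\overline{U\setminus\Omega_\epsilon}\subset\{|\bx|\le R,\ d(\bx)\ge\epsilon\}$ a compact subset of $\Omega$, and $f$ is $L^p$ on each piece by the two hypotheses. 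Let $\tilde f:=f\,\mathbf{1}_{U}$, regarded as an element of $L^p((0,T)\times\mathbb{R}^3)$; by $L^p$-continuity of translations (Minkowski's integral inequality plus dominated convergence) its spatial mollification satisfies $\Bar{\tilde f}_\ell\to\tilde f$ in $L^p((0,T)\times\mathbb{R}^3)$ as $\ell\to 0$. The key point is that for $\bx\in K$ with $d(\bx)\ge h$ and $\ell<\min(h,1)$, every point $\bx+\mathbf{r}$ with $|\mathbf{r}|<\ell$ has $d(\bx+\mathbf{r})>d(\bx)-\ell>0$ and $|\bx+\mathbf{r}|<R$, hence lies in $U$; therefore $\Bar f_\ell=\Bar{\tilde f}_\ell$ on $K\cap\{d\ge h\}$, and also $f=\tilde f$ there. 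Consequently, using $0\le\eta_{h,\ell}\le1$ with support in $\{d\ge h\}$,
\begin{align*}
\norm{\eta_{h,\ell}(\Bar f_\ell-f)}_{L^p((0,T),L^p(K))}
&\le \norm{\Bar f_\ell-f}_{L^p((0,T),L^p(K\cap\{d\ge h\}))}\\
&=\norm{\Bar{\tilde f}_\ell-\tilde f}_{L^p((0,T),L^p(K\cap\{d\ge h\}))}
\le \norm{\Bar{\tilde f}_\ell-\tilde f}_{L^p((0,T)\times\mathbb{R}^3)},
\end{align*}
and the right-hand side tends to $0$ as $\ell\to 0$, with a bound independent of $h$.

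Combining the two estimates gives $\norm{\eta_{h,\ell}\Bar f_\ell - f}_{L^p((0,T),L^p(K))}\le \norm{\Bar{\tilde f}_\ell-\tilde f}_{L^p((0,T)\times\mathbb{R}^3)}+\norm{f}_{L^p((0,T),L^p(K\cap\{d<h+\ell\}))}$, whose right-hand side tends to $0$ as $(h,\ell)\to 0$ subject to $0<\ell<h$, which is the claim. There is no deep obstacle: the entire content is the near-wall bookkeeping in the mollification term — ensuring that $\Bar f_\ell$ is defined on the support of $\eta_{h,\ell}$ and that the truncated extension $\tilde f$ reproduces $\Bar f_\ell$ on the expanding set $K\cap\{d\ge h\}$ — which is precisely what the ordering $\ell<h$ and the spatial truncation to $\{|\bx|<R\}$ are designed to handle; everything else is dominated convergence and the standard $L^p$ theory of mollifiers.
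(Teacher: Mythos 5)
Your proof is correct, but it takes a genuinely different route from the paper's. The paper decomposes the \emph{domain} $K$ into a near-wall shell $\Omega_\delta\cap K$ and its complement $\Omega^\delta\cap K$ for an auxiliary $\delta>h$, controls the shell by $3\norm{f}_{L^p(\Omega_{2\delta})}$ and the complement by local mollifier convergence, then takes $\limsup_{h,\ell\to 0}$ at fixed $\delta$ and finally sends $\delta\to 0$ by dominated convergence; the whole argument is stated pointwise in $t$ and then integrated using a domination bound. You instead decompose the \emph{function}: $\eta_{h,\ell}\Bar f_\ell - f = \eta_{h,\ell}(\Bar f_\ell - f) + (\eta_{h,\ell}-1)f$, handle the cutoff error by dominated convergence on the shrinking set $\{d<h+\ell\}$, and reduce the mollification error to the classical $L^p(\mathbb{R}^3)$ statement by truncating $f$ to $\tilde f = f\,\mathbf{1}_U$ and observing that $\Bar f_\ell=\Bar{\tilde f}_\ell$ on $K\cap\{d\ge h\}$ once $\ell<h$. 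The net effect is the same, and the technical hinge --- the ordering $\ell<h$ ensures the mollification only samples points where $\Bar f_\ell$ is defined --- is common to both, but your split avoids the extra parameter $\delta$ and the $\limsup$ step, yields the clean one-line bound $\norm{\Bar{\tilde f}_\ell-\tilde f}_{L^p((0,T)\times\mathbb{R}^3)}+\norm{f}_{L^p((0,T),L^p(K\cap\{d<h+\ell\}))}$ that is manifestly uniform in $h$, and sidesteps the paper's somewhat informal passage from a.e.-$t$ convergence to the full space-time norm. The paper's approach is slightly more self-contained in that it never leaves $\Omega$ and never needs a global extension of $f$ to $\mathbb{R}^3$, but at the cost of the extra limiting parameter; the trade-off is a matter of taste, and both proofs use precisely the two hypotheses (local $L^p$ and near-wall $L^p$) in the same essential way.
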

\begin{proof}
    Let $\delta>0$ be a sufficiently small number with $\ell<h<\delta<\frac{\epsilon}{2}$.
    It is well-known for $f\in L^p_{\text{loc}}(\Omega)$ that $\Bar{f}_{\ell}\to f$ in $L^p_{\text{loc}}(\Omega)$ 
    (e.g. see Appendix C.5 of \cite{evans2010partial}). 
    In particular, for a.e. $t\in(0,T)$,
    \begin{align}
        \lim_{h,\ell\to0}\norm{\eta_{h,\ell}\Bar{f}_{\ell}-f}_{L^p(\Omega^{\delta}\cap K)} = 0
        \label{cout} 
    \end{align}
    and 
    \be \norm{\eta_{h,\ell}\Bar{f}_{\ell}-f}_{L^p(\Omega^{\delta}\cap K)}\leq 2\norm{f}_{L^p(K_{\epsilon}\cap\Omega)} \ee
    for $K_\epsilon=\{\bx\in {\mathbb R}^3: \exists \by\in K,\, |\bx-\by|<\epsilon\}.$
    On the other hand, for a.e. $t\in(0,T)$,
    \begin{align}
        \norm{\eta_{h,\ell}\Bar{f}_{\ell}-f}_{L^p(\Omega_{\delta}\cap K)} 
        &\leq \norm{\eta_{h,\ell}\Bar{f}_{\ell}-f}_{L^p(\Omega_{\delta})} \\
        &= \norm{\eta_{h,\ell}\Bar{f}_{\ell}-f}_{L^p(\Omega_{\delta}\backslash\Omega_h)}
        + \norm{f}_{L^p(\Omega_{h})}\\
        &\le2\norm{f}_{L^p(\Omega_{2\delta})}
        +\norm{f}_{L^p(\Omega_{h})}\\
        &\le3\norm{f}_{L^p(\Omega_{2\delta})}
        \ \le \ 3\norm{f}_{L^p(\Omega_{\epsilon})}
        \label{cin} 
    \end{align}
    Then by combining \eqref{cout},\eqref{cin} 
    \begin{align}
        \limsup_{h,\ell\to0}\norm{\eta_{h,\ell}\Bar{f}_{\ell}-f}_{L^p(K)}
        \le3\norm{f}_{L^p(\Omega_{2\delta})}\xrightarrow{\delta\to0}0
        \label{lemma2conv1}
    \end{align}
    where the latter follows by dominated convergence theorem. 
    Since (\ref{lemma2conv1}) is true for a.e. $t\in(0,T),$
    one obtains (\ref{lemma2conv}), or convergence in $L^p((0,T),L^p(K)).$
\end{proof}

\subsection{Proof of Theorem 2} 
Take any extension operator $\textbf{Ext}\in\mathcal{E}_{\mathcal{T}}$
and smooth section $\bpsi\in D((\partial B)_T, \mathcal{T}^*(\partial B)_T)$.
Let $\barphi = \ext(\bpsi)$ so that $\barphi\in\Bar{D}(\Bar{\Omega}\times(0,T),\mathbb{R}^3)$ 
and $\barphi\bdot\bn = 0$ on $(\partial B)_T$.
Integrating the coarse-grained Euler equations (\ref{coarseGrainEuler}) against $\barphi$ yields
\begin{align}
    &\int_0^T\int_{\Omega}\partial_t\barphi\bdot(\eta_{h,\ell}\Bar{\mathbf{u}}_{\ell})\,dV\,dt
    +\int_0^T\int_{\Omega}\grad\barphi\bdots\eta_{h,\ell}(\Bar{\mathbf{T}}_{\ell}+\Bar{p}_{\ell}\mathbf{I})
    \,dV\,dt\label{cgEuler1}\\
    &=-\int_0^T\int_{\Omega}\barphi\bdot(\grad\eta_{h,\ell}\bdot\Bar{\mathbf{T}}_{\ell} 
    + \Bar{p}_{\ell}\grad\eta_{h,\ell})\,dV\,dt\label{cgEuler2}
\end{align}
As $\barphi\in\Bar{D}(\Bar{\Omega}\times(0,T),\mathbb{R}^3)$, there exists a compact set 
$K\subset\Bar{\Omega}$ such that 
\begin{align}
    \text{supp}(\barphi)\subset K\times(0,T)\subset\Bar{\Omega}\times(0,T)
\end{align}
By Lemma \ref{lemma2}, as $h,\ell\to0$,
\begin{align}
    \eta_{h,\ell}\Bar{\mathbf{u}}_{\ell}&\to\mathbf{u}\text{ in } L^2((0,T),L^2(K))\\
    \eta_{h,\ell}\Bar{\mathbf{T}}_{\ell}&\to\mathbf{T}\text{ in } L^1((0,T),L^1(K))\\
    \eta_{h,\ell}\Bar{p}_{\ell}&\to p\text{ in } L^1((0,T),L^1(K))
\end{align}
Then, as $h,\ell\to0$, (\ref{cgEuler1}) converges to
\begin{align}
    \int_0^T\int_{\Omega}\partial_t\barphi\bdot\mathbf{u}\,dV\,dt
    +\int_0^T\int_{\Omega}\grad\barphi\bdots[\mathbf{T} + p\mathbf{I}]\,dV\,dt
\end{align}
Thus we obtain from (\ref{cgEuler1})-(\ref{cgEuler2}) that
\begin{align}
    &-\lim_{h,\ell\to0}\int_0^T\int_{\Omega}\barphi\bdot(\grad\eta_{h,\ell}\bdot\Bar{\mathbf{T}}_{\ell} 
    + \Bar{p}_{\ell}\grad\eta_{h,\ell}) \,dV\,dt\\
    &= \int_0^T\int_{\Omega}\partial_t\barphi\bdot\mathbf{u}\, dV\,dt
    +\int_0^T\int_{\Omega}\grad\barphi\bdots[\mathbf{T} + p\mathbf{I}]\,dV\,dt\label{viscStress2}
\end{align}
Then by comparison with $\langle\btau_w,\bpsi\rangle$ defined by (\ref{taulim}), we obtain that
\begin{align}
    -\lim_{h,\ell\to0}\int_0^T\int_{\Omega}\barphi\bdot(\grad\eta_{h,\ell}\bdot\Bar{\mathbf{T}}_{\ell} 
    + \Bar{p}_{\ell}\grad\eta_{h,\ell})\,dV\,dt = \langle \btau_w,\bpsi\rangle
\end{align}
In other words, for any $\ext\in\cE_{\cT}$
\begin{align}
    -\lim_{h,\ell\to0}\ext^*(\grad\eta_{h,\ell}\bdot\Bar{\mathbf{T}}_{\ell} + \Bar{p}_{\ell}\grad\eta_{h,\ell}) 
    = \btau_w\text{ in }D'((\partial B)_T, \mathcal{T}(\partial B)_T)
\end{align}

\subsection{Proof of Corollary 1} % TODO: add reference to cor.1
For any $\ext\in\tilde{\cE}_{\cT}$ as in (\ref{extCond1}),
$\forall\bpsi\in D((\partial B)_T, \mathcal{T}^*(\partial B)_T)$, $\barphi = \textbf{Ext}(\bpsi)$ 
\begin{align}
    \langle\ext^*(\Bar{p}_{\ell}\grad\eta_{h,\ell}),\bpsi\rangle 
    &= \int_0^T\int_{\Omega}\barphi\bdot(\Bar{p}_{\ell}\grad\eta_{h,\ell})\,dV\,dt\\
    &=\int_0^T\int_{\Omega_{h+\ell}\backslash\Omega_{h}}\theta'_{h,\ell}(d(\mathbf{x}))\Bar{p}_{\ell}(\mathbf{x},t)\barphi(\mathbf{x},t)
    \bdot\mathbf{n}(\pi(\mathbf{x}))\,dV\,dt \end{align}
\begin{align} 
\left|  \langle\ext^*(\Bar{p}_{\ell}\grad\eta_{h,\ell}),\bpsi\rangle \right| 
    &\le\norm{\theta_{h,\ell}'\barphi\bdot\mathbf{n}}_{L^{\infty}((\Omega_{h+\ell}\backslash\Omega_h)\times(0,T))}
    \left(\int_0^T\int_{\Omega_{h+\ell}\backslash\Omega_h}|\Bar{p}_{\ell}|\,dV\,dt\right)\\
    &\le\frac{C}{\ell}\norm{\barphi\bdot\mathbf{n}}_{L^{\infty}((\Omega_{h+\ell}\backslash\Omega_h)\times(0,T))}
    \norm{p}_{L^1((\Omega_{h+2\ell}\backslash\Omega_{h-\ell})\times(0,T))}\\
    & \le C' \norm{p}_{L^1((\Omega_{3h})\times(0,T))}\xrightarrow{h,\ell\to0}0
\end{align}
by \eqref{extCond1} and dominated convergence. By comparison with \eqref{tangentialLimit} we obtain that
\begin{align}
    -\lim_{h,\ell\to0}\ext^*(\grad\eta_{h,\ell}\bdot\Bar{\mathbf{T}}_{\ell}) = \btau_w \text{ in } D'((\partial B)_T, \mathcal{T}(\partial B)_T)
\end{align}

\section{Proof of Theorem 3} 
\noindent
Theorem \ref{theorem3} follows from Proposition \ref{proposition2}, 
in conjunction with Theorem \ref{theorem1} \& \ref{theorem2} and Corollary \ref{corollary1}.
We thus prove Proposition \ref{proposition2} in this section. 
We follow the idea in \cite{drivas2018nguyen} by bounding the following term directly 
\begin{align}
    \grad\eta_{h,\ell}\bdot\Bar{\mathbf{T}}_{\ell}(\bx, t) 
    = \theta_{h,\ell}'(d(\mathbf{x}))\mathbf{n}(\pi(\mathbf{x}))\bdot\overline{\mathbf{u}\otimes\mathbf{u}}_{\ell}(\mathbf{x},t)
\end{align}
which is supported in $\Omega_{h+\ell}\backslash\Omega_h\subset\Omega_{3h}\subset\Omega_{\epsilon}$. 
We write, $\forall \mathbf{x}\in\Omega_{h+\ell}\backslash\Omega_h$, a.e. $t\in (0,T)$,
\begin{eqnarray}
    \mathbf{n}(\pi(\mathbf{x}))\bdot\overline{\mathbf{u}\otimes\mathbf{u}}_{\ell}(\mathbf{x},t) 
    &=& \int_{\mathbb{R}^3}G_\ell(\mathbf{r})[\mathbf{n}(\pi(\mathbf{x}))-\mathbf{n}(\pi(\mathbf{x} + \mathbf{r}))]
    \bdot\mathbf{u}\otimes\mathbf{u}(\mathbf{x} + \mathbf{r},t)\,V(d\mathbf{r})\cr
    &&+\int_{\mathbb{R}^3}G_\ell(\mathbf{r})\mathbf{n}(\pi(\mathbf{x} + \mathbf{r}))
    \bdot\mathbf{u}\otimes\mathbf{u}(\mathbf{x} + \mathbf{r},t)\, V(d\mathbf{r})
\end{eqnarray}
Since $\mathbf{n}\circ\pi$ is smooth in $\overline{\Omega_{\epsilon}}$, $\forall\delta>0$, $\exists\rho=\rho(\delta)>0$ s.t.
\begin{align}
    |\mathbf{n}(\pi(\mathbf{x}))-\mathbf{n}(\pi(\mathbf{x} + \mathbf{r}))|\le\delta
\end{align}
for all $\mathbf{x}\in\Omega_{h+\ell}\backslash\Omega_{h}$ and $|r|<\ell<\rho$. Then it follows that
\begin{align}
    |\mathbf{n}(\pi(\mathbf{x}))
    \bdot\overline{\mathbf{u}\otimes\mathbf{u}}_{\ell}(\mathbf{x},t)|
    &\le\left(\delta\norm{\mathbf{u}(t)}_{L^{\infty}(\Omega_{\epsilon})} 
    + \norm{\mathbf{n}\bdot\mathbf{u}(t)}_{L^{\infty}(\Omega_{\varepsilon})}\right)\norm{\mathbf{u}(t)}_{L^{\infty}(\Omega_{\epsilon})}
\end{align}
Using these bounds above, together with the fact that $\norm{\theta_{h,\ell}'(d(\mathbf{x}))}_{L^{\infty}}\le\frac{C}{\ell}$ 
and $|\Omega_{h+\ell}\backslash\Omega_h|\le C'\ell$, we obtain that for $\bpsi\in D((\partial B)_T, \mathcal{T}^*(\partial B)_T)$, $\ext\in\cE_{\cT}$
\begin{align}
    \langle \ext^*(\grad\eta_{h,\ell}\bdot\Bar{\mathbf{T}}_{\ell}),\bpsi\rangle 
    &= \int_0^T\int_{\Omega}\barphi\bdot(\grad\eta_{h,\ell}\bdot\Bar{\mathbf{T}}_{\ell})\,dV\,dt
\end{align}
\begin{align}
    \left|\langle
    \ext^*(\grad\eta_{h,\ell})\bdot\Bar{\mathbf{T}}_{\ell},\bpsi\rangle \right|
    &\le\norm{\barphi}_{L^{\infty}((0,T)\times\Omega)} 
    \int_0^T\int_{\Omega_{h+\ell}\backslash\Omega_h}|\grad\eta_{h,\ell}\bdot\Bar{\mathbf{T}}_{\ell}|\,dV\,dt\\
    &\lesssim \sup_{i\in I}p_{N,m,i}(\bpsi)\times\bigg[\delta\norm{\mathbf{u}}_{L^2((0,T),L^{\infty}(\Omega_{\epsilon}))}^2\\
    &+\norm{\mathbf{n}\bdot\mathbf{u}}_{L^2((0,T),L^{\infty}(\Omega_{\epsilon}))}\norm{\mathbf{u}}_{L^2((0,T),L^{\infty}(\Omega_{\epsilon}))}
    \bigg]
\end{align}
where $\barphi = \textbf{Ext}(\bpsi).$
Thus, by the assumptions on the near wall uniform boundedness of $\mathbf{u}$ (\ref{wallboundedness})
and the continuity of wall normal velocity (\ref{wallnormal}), the first 
result \eqref{prop1a} in Proposition 1 follows:
\begin{align}
    \lim_{h,\ell\to0}\ext^*(\grad\eta_{h,\ell}\bdot\Bar{\mathbf{T}}_{\ell}) = 0 \text{ in } D'((\partial B)_T, \mathcal{T}(\partial B)_T)
\end{align}
It is easy to see that the argument above applies also for all $\ext\in\cE_{\cN}$ and 
$\bpsi\in D((\partial B)_T, \mathcal{N}^*(\partial B)_T)$. Thus, the result \eqref{prop1b} in Proposition 1 also follows: 
\begin{align}
    \lim_{h,\ell\to0}\ext^*(\grad\eta_{h,\ell}\bdot\Bar{\mathbf{T}}_{\ell}) = 0 \text{ in } D'((\partial B)_T, \mathcal{N}(\partial B)_T)
\end{align}

\section*{Acknowledgements} 
We are grateful to Nigel Goldenfeld, Samvit Kumar, Charles Meneveau, Yves Pomeau, Tamer Taki, 
Vlad Vicol and especially Theo Drivas for discussions of this problem. This work was funded 
by the Simons Foundation, via Targeted Grant in MPS-663054 and the Collaboration 
Grant No. MPS-1151713.

\bibliographystyle{plain}
\bibliography{bibliography}
\end{document}